\newcommand{\X}{\mathcal{X}}
\newcommand\ignore[1]{}
\newcommand{\veps}{\varepsilon}
\newcommand{\R}{{\mathbb R}}
\newcommand{\E}{{\mathbb E}}
\newcommand{\ceil}[1]{\lceil{#1}\rceil}
\newcommand{\etal}{{\it{et al. }}}
\newcommand{\iid}{i.i.d.\ }
\DeclareMathOperator{\diam}{diam}
\DeclareMathOperator{\dimC}{dim}
\DeclareMathOperator{\var}{Var}
\DeclareMathOperator{\ddim}{ddim}
\def\Lovasz{Lov{\'a}sz\xspace}
\newtheorem{theorem}{Theorem}[section]
\newtheorem{lemma}[theorem]{Lemma}
\newtheorem{claim}[theorem]{Claim}
\newtheorem{corollary}[theorem]{Corollary}
\theoremstyle{definition}
\title{Dimension reduction techniques for $\ell_p$ ($1 \le p \le 2$), with applications}
\author{
Yair Bartal
\footnote{Hebrew University. 
Work supported in part by an Israel Science Foundation grant \#1609/11.
Email: \texttt{yair@cs.huji.ac.il} }
\and
Lee-Ad Gottlieb
\footnote{
Ariel University.
Email: \texttt{leead@ariel.ac.il} }
}
\begin{document}
\maketitle

\thispagestyle{empty}

\begin{abstract}
For Euclidean space ($\ell_2$), there exists the powerful dimension reduction transform of Johnson and 
Lindenstrauss \cite{JL84}, with a host of known applications. Here, we consider the problem of dimension reduction 
for all $\ell_p$ spaces $1 \le p \le 2$. 
Although strong lower bounds are known for dimension reduction in $\ell_1$, Ostrovsky and 
Rabani \cite{OR02} successfully circumvented these by presenting an $\ell_1$ embedding that maintains 
fidelity in only a bounded distance range, with applications to clustering and nearest neighbor
search. However, their embedding techniques are specific to $\ell_1$ and do not naturally extend
to other norms.

In this paper, we apply a range of advanced techniques and produce bounded range dimension reduction 
embeddings for all of $1 \le p \le 2$, thereby demonstrating that the approach initiated by 
Ostrovsky and Rabani for $\ell_1$ can be extended to a much more general framework.
We also obtain improved bounds in terms of the intrinsic dimensionality.
As a result we achieve improved bounds for proximity problems including 
snowflake embeddings and clustering.
\end{abstract}

\newpage
\setcounter{page}{1}

\section{Introduction}

Dimension reduction for normed space is a fundamental tool for
algorithms and related fields. A much celebrated result
for dimension reduction is the well-known $l_2$ flattening lemma of Johnson
and Lindenstrauss \cite{JL84}: For every $n$-point subset of $l_2$ and every
$0<\veps<1$, there is a mapping into $l_2^k$ that preserves all interpoint
distances in the set within factor $1+\veps$, with target dimension
$k = O(\veps^{-2} \log n)$. The dimension reducing guarantee of
the Johnson-Lindenstrauss (JL) transform is remarkably strong, and has the
potential to make algorithms with a steep dependence on the dimension
tractable. It can be implemented as a simple linear transform, and has proven to
be a very popular tool in practice, even spawning a stream of literature devoted to
its analysis, implementation and extensions (see for example \cite{Al03,AC06,AL11,Ach01,DKS10,VW11}).

Given the utility and impact of the JL transform, it is natural to ask whether
these strong dimension reduction guarantees may hold for other $\ell_p$ spaces as well (see
\cite{In01} for further motivation).
This is a fundamental open problem in embeddings, and has attracted significant research. 
Yet the dimension reduction bounds known for $\ell_p$ norms other than
$\ell_2$ are much weaker than those given by the JL transform, and 
it is in fact known that a linear dimension reduction mapping in the style
the JL transform is quite unique to the $\ell_2$-norm
\cite{JN09}. Further, strong lower bounds on dimension reduction are known for
$\ell_1$ \cite{BC05,LN04,ACNN11} and for $\ell_\infty$ \cite{Ma96}, and it
is a plausible conjecture that $\ell_2$ is the only $\ell_p$ space which admits
the strong distortion and dimension properties of the JL transform.

Ostrovsky and Rabani \cite{KOR98,OR00,OR02} successfully circumvented the negative results for $\ell_1$,
and presented a dimension reduction type embedding for the Hamming cube -- and by extension, 
all of $l_1$ -- which preserves fidelity
only in a bounded range of distances. They further demonstrated that their embedding finds use
in algorithms for nearest neighbor search (NNS) and clustering, as these can be reduced to 
subproblems where all relevant distance are found in a bounded range. In fact, 
a number of other proximity problems can also be reduced to bounded range subproblems, including 
the construction of distance oracles and labels \cite{HM06,BGKLR11}, snowflake embeddings
\cite{GK11,BRS11}, and $\ell_p$-difference of a pair of data streams.
Hence, we view a bounded range embedding as a framework for the solution of
multiple important problems. Note also that for spaces with fixed aspect ratio
(a fixed ratio between the largest and smallest distances in the set -- 
a common scenario in the literature, see \cite{IN07}), 
a bounded range mapping is in effect a complete dimension reduction embedding.

The dimension reduction embedding of Ostrovsky and Rabani 
is very specific to $\ell_1$, and does not naturally extend to other norms. 
The central contribution of the paper is to bring advanced techniques to bear on this problem, 
thereby extending this framework to all $1 \le p \le 2$.

\paragraph{Our contribution:} 
We first present a basic embedding in Section \ref{sec:trans}, which shows 
that we can reduce dimension while realizing a certain distance transform
with low distortion.
Using this result, we are able to derive two separate dimension-reducing
embeddings:

\begin{itemize}
\item
Range embedding.
In Theorem \ref{thm:dim-reduction}, we present
an embedding which preserves distances in a given range
with $(1+\veps)$ distortion. The target
dimension is $O(\log n)$, with dependence on the range parameter and $\veps$. This generalizes the 
approach of Ostrovsky and Rabani \cite{OR00,OR02} to all $1 \leq p \leq 2$.
This embedding can be applied in the streaming setting as well, and it can also be modified to achieve
target dimension polynomial in the {\em doubling dimension} of the set (Lemma \ref{lem:fi}).%
\footnote{%
Our range embedding has the additional property that 
it can be used to embed $\ell_p^m$ into $\ell_q^{O(\log n)}$ 
(that is, $m$-dimensional $\ell_p$ into $O(\log n)$-dimensional $\ell_q$, for $1 \le q < p$) with 
$(1+\veps)$-distortion in time $O(m \log n)$, with dependence on the range parameter and 
$\veps$. This is a fast version of the embedding of \cite{JS82}, a common tool for embedding $\ell_p$ into 
more malleable spaces such as $l_1$. 
(The embedding of \cite{JS82} is particularly useful for nearest neighbor search, see \cite{KOR98,HIM12}.)
Note that \cite{JS82} features a large overhead cost $O(m^2)$, and since 
$m$ can be as large as $\Theta(n)$, this overhead can be the most expensive step in algorithms for $\ell_p$.

We also note that the embedding of Theorem \ref{thm:dim-reduction} can be used to produce
efficient algorithms for approximate nearest neighbor search and $\ell_p$ difference, 
although other efficient techniques have already been
developed for these specific problems (see \cite{Kl97,IM98,KOR98,A09,DIIM04,Pa06,N13} for NNS, and 
\cite{FKSV03,FS01,AMS96,AGMS99,IW05,IW05,In06,Li08} for $\ell_p$ difference).}
\item
Snowflake embedding.
An $\alpha$-snowflake embedding is one in which each inter-point distance $t$ in the origin space
is replaced by distance $t^{\alpha}$ in the embedded space (for $0 <\alpha <1$).
It was observed in \cite{GK11,BRS11} that the snowflake of a finite metric space in $l_2$ may be embedded
in dimension which is close to the intrinsic dimension of the space (measured by its doubling dimension), 
and this may
be independent of $n$. In \cite{GK11} the case of $l_1$ was considered as well, however the resulting dimension had
doubly exponential dependence on the doubling dimension.
We demonstrate that the basic embedding can be used to build
a snowflake for $\ell_p$ for all $1 \leq p \leq 2$ with dimension polynomial in the doubling dimension; 
this is found in Lemma
\ref{lem:snowflake}. For $\ell_1$ this provides a {\em doubly exponential} improvement over the 
previously known dimension bound \cite{GK11}, while generalizing the 
$p \in \{1,2\}$ results of \cite{GK11,BRS11} to all $1 \leq p \leq 2$. 
\end{itemize}

\noindent{\em Application to clustering.}
To highlight the utility of our embeddings, we demonstrate 
(in Section \ref{sec:cluster}) their applicability in deriving
better runtime bounds for clustering problems:

We first consider the $k$-center problem, and show that our snowflake embedding can 
be used to provide an efficient algorithm.
For $\ell_p$-spaces of low doubling dimension and fixed $p$, $1 \le p \le 2$, 
we apply our range embedding in conjunction with the 
clustering algorithm of Agarwal and Procopiuc \cite{AP02}, and 
obtain a $(1+\veps)$-approximation to the $k$-center problem in time 
$O(n (2^{\tilde{O}(\ddim(S))} + \log k)) 
+ (k \cdot \veps^{-\ddim(S)/\veps^2})^{k^{1-(\veps/\ddim(S))^{O(1)}}}$. 

We then consider the min-sum clustering problem, and show that our 
snowflake embedding can be used to provide an efficient algorithm for this problem.
For $\ell_p$-spaces of low doubling dimension, we apply our snowflake 
embedding in conjunction with the clustering algorithm of Schulman 
\cite{Schulman00}, and obtain a $(1+\veps)$-approximation to the 
$k$-center problem in randomized time
$n^{O(1)} + 2^{2^{(O(d'))}} (\veps \log n)^{O(d')}$
where $d' = (\ddim / \veps)^{O(1)}$.

\paragraph{Related work.}
For results on dimension reduction for $\ell_p$ spaces,
see \cite{Sc11} for $p<2$, and also \cite{Sc87,Ta90} for $p=1$,
\cite{Ba90,Ta95} for $1<p<2$, and \cite{Ma96} for $p=\infty$.
Other related notions of dimension reduction have been suggested in the literature.
Indyk \cite{In06} devised an analogue to the JL-Lemma which uses $p$-stable distributions
to produce estimates of interpoint distances; strictly speaking, this is not an embedding into $\ell_p$
(e.g.\ it uses median over the coordinates).
Motivated by the nearest neighbor search problem,
Indyk and Naor \cite{IN07} proposed a weaker form of dimension reduction,
and showed that every doubling subset $S\subset \ell_2$ admits
this type of dimension reduction into $\ell_2$ of dimension
$O(\ddim(S))$.
Roughly speaking, this notion is weaker in that
distances in the target space are allowed
to err in one direction (be too large) for all but one pair of points.
Similarly, dimension reduction into ordinal embeddings (where only relative distance is
approximately preserved) was considered in \cite{ABDFHS08,BDHSZ08}.
Bartal, Recht and Schulman \cite{BRS11} developed a variant of the JL-Lemma
that is local -- it preserves the distance between every point and
the $\hat k$ points closest to it.
Assuming $S\subset \ell_2$ satisfies a certain growth rate condition,
they achieve, for any desired $\hat k$ and $\veps>0$, an embedding of this type
with distortion $1+\veps$ and dimension $O(\veps^{-2}\log \hat k)$.

\section{Preliminaries}\label{sec:preliminaries}

\paragraph{Embeddings and metric transforms.}
Following \cite{BES06}, we define an {\em oblivious} embedding to be an embedding
which can be computed for any point of a database set $X$ or query set $Y$,
without knowledge of any other point in $X$ or $Y$. (This differs slightly from
the definition put forth by Indyk and Naor \cite{IN07}.) Familiar
oblivious embeddings include standard implementations of the JL-Lemma for $l_2$, the dimension reduction mapping of Ostrovsky
and Rabani \cite{OR02} for the Hamming cube, and the embedding of Johnson and Schechtman \cite{JS82} for $\ell_p$,
$p \le 2$.
A {\em transform} is a function mapping from the positive reals to the positive reals, and
a \emph{metric transform} maps a metric distance function to another metric distance function on the same set
of points. 
An embedding is {\em transform preserving} with respect to a transform if it achieves the distances defined by 
that transform.

\paragraph{Range Embedding.}
Let $(X,d_X),(Y,d_Y)$ be metric spaces. For distance scales $0 \leq a \leq b \leq \infty$, an $[a,b]$-embedding of $X$
into $Y$ with distortion $D$ is a mapping $f:X\rightarrow Y$ such that for all
$x,y \in X$ such that $d_X(x,y) \in [a,b]$:
$ 1 \leq c \cdot \frac{d_Y(f(x),f(y))}{d_X(x,y)}   \leq D.$
(Here, $c$ is any scaling constant.)
Then $f$ is a \emph{range embedding} with range $[a,b]$.
If $f$ has the additional property that for all $x,y$ such that $d(x,y) < a$ we have
$c \cdot \frac{d_Y(f(x),f(y))}{a} \leq D$,
then we say that $f$ is range preserving from below.
Similarly, if for all $x,y$ such that $d(x,y) > b$ we have
$c \cdot \frac{d_Y(f(x),f(y))}{b} \geq 1$,
then we say that $f$ is range preserving from above.
And if $f$ is range preserving from above and below, then we say that $f$ is a {\em strong} range embedding.

Let $R> 1$ be a parameter. We say that $X$ admits an $R$-range embedding into $Y$ with distortion 
$D$ if for every $u>0$ there exists an $[u,u R]$ embedding of $X$ into $Y$ with distortion $D$. As 
above, an $R$-range embedding may be range preserving from above or below. We will usually take
$u=1$.

\paragraph{Nets and hierarchies.}
Given a metric space $S$, $S' \subset S$ is a $\gamma$-net of $S$ if the minimum interpoint distance in $S'$ is at least
$\gamma$, while the distance from every point of $S$ to its nearest neighbor in $S'$ is less than $\gamma$.
Let $S$ have minimum inter-point distance 1.
A {\em hierarchy} is a series of $\lceil \log \Delta \rceil$ nets ($\Delta$ being the aspect ratio of $S$), 
where each net $S_i$ is a $2^i$-net of the previous net $S_{i-1}$.
The first (or bottom) net is $S_0 = S$,
and the last (or top) net $S_t$ contains a single point called the {\em root}.
For two points $u \in S_i$ and $v \in S_{i-1}$, if $d(u,v)<2^i$ then we say that $u$ {\em covers} $v$, and 
this definition allows $v$ to have multiple covering points in $S_i$.
The closest covering point of $v$ is its {\em parent}.
The distance from a point in $S_i$ to its ancestor in $S_j$ is at most
$\sum_{k=i+1}^j 2^k = 2 \cdot (2^j - 2^{i+1}) < 2 \cdot 2^j$.

Given $S$, a hierarchy for $S$ can be built in time 
$2^{O(\ddim(S))}n$, and this term bounds the size of the hierarchy as well \cite{HM06,CG06}.
The height of the hierarchy is 
$O (\min \{n, \log \Delta \})$.

\paragraph{Doubling dimension.}
For a metric $(\X,\rho)$, let $\lambda>0$
be the smallest value such that every
ball in $\X$ can be covered by $\lambda$ balls of half the radius.
The {\em doubling dimension} of $\X$ is $\ddim(\X)=\log_2\lambda$.
Note that $\ddim(\X) \le \log n$.
The following packing property can be shown (see, for example \cite{KL04}):
Suppose that $S\subset\X$ has a minimum interpoint distance
of at least $\alpha$. Then
$ |S| \leq \Big(\tfrac{2\diam(S)}{\alpha}\Big)^{\ddim(\X)}. $

\paragraph{Probabilistic partitions.}
Probabilistic partitions are a common tool used in embeddings. Let $(X,d)$
be a finite metric space. A partition 
$P$ of $X$ is a collection of
non-empty pairwise disjoint clusters $P=\{C_1,C_2,\ldots,C_t\}$ such
that $X=\cup_jC_j$. For $x\in X$ we denote by $P(x)$ the cluster
containing $x$.
We will need the following decomposition lemma due to Gupta, Krauthgamer
and Lee \cite{GKL03}, Abraham, Bartal and Neiman \cite{ABN08}, and Chan,
Gupta and Talwar \cite{CGT10}.\footnote{%
\cite{GKL03} provided slightly different quantitative bounds
than in Theorem \ref{thm:decomp}.
The two enumerated properties follow, for example, from Lemma 2.7 in
\cite{ABN08}, and the bound on support-size $m$ follows by an application
of the \Lovasz\ Local Lemma sketched therein.
}
Let ball
$B(x,r) = \{y| \, \|x-y\| \le r \}$.

\begin{theorem}[Padded Decomposition of doubling metrics \cite{GKL03,ABN08,CGT10}]
\label{thm:decomp}
There exists a constant $c_0>1$,
such that for every metric space $(X,d)$, every $\veps\in(0,1)$,
and every $\delta > 0$, there is a
multi-set $\mathcal D = [P_1, \ldots, P_m]$ of partitions of $X$,
with $m \leq c_0 \veps^{-1} \dim(X)\log \dim(X)$, such that
\begin{enumerate} 
\item Bounded radius: $\diam(C) \leq \delta$ for all clusters $C \in \bigcup_{i=1}^m P_i$.
\item Ball padding: If $P$ is chosen uniformly {from} $\mathcal D$, then for all $x \in X$,
$$
\Pr_{P\in\mathcal D} [B(x,\tfrac{\delta}{c_0\dim(X)}) \subseteq P(x)] \geq 1-\veps.
$$
\end{enumerate}
\end{theorem}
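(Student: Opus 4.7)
My plan is to proceed in two stages: first build a single random partition whose clusters have diameter at most $\delta$ and for which every ball of radius $\delta/(c_0\dim(X))$ is padded with high enough probability, and then amplify this single-partition guarantee into the claimed $1-\veps$ marginal by taking a collection of independent partitions and controlling all balls at once via the \Lovasz Local Lemma.

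For the single partition I would use a CKR-style random decomposition: sample a uniformly random permutation $\pi$ of $X$ and a random radius $R$ (uniform in, say, $[\delta/4,\delta/2]$), and assign each $y\in X$ to the cluster of the first $\pi(i)$ with $d(\pi(i),y)\le R$. Property~1 is immediate. The standard analysis (ordering candidate ``first cutters'' of $B(x,r)$ by distance from $x$ and summing the probability that each is selected but fails to fully contain $B(x,r)$) yields
\[
\Pr[B(x,r)\subseteq P(x)]\ \ge\ 1 - O\!\bigl(\tfrac{r}{\delta}\log|B(x,\delta)|\bigr),
\]
and the doubling packing property bounds $|B(x,\delta)|\le 2^{O(\dim(X))}$. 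Using the refined single-scale distribution of \cite{ABN08}, the padding probability at $r=\delta/(c_0\dim(X))$ can be pushed arbitrarily close to $1$ by choosing $c_0$ sufficiently large.

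To build $\mathcal D$, draw $m$ independent partitions $P_1,\dots,P_m$ from this distribution and set $\mathcal D=[P_1,\dots,P_m]$. For each $x\in X$ define the bad event $A_x$ that strictly fewer than $(1-\veps)m$ of the $P_i$ pad $B(x,\delta/(c_0\dim(X)))$. When the per-partition padding probability is close enough to $1$, a Chernoff bound gives $\Pr[A_x]\le e^{-\Omega(\veps m)}$. The event $A_x$ depends only on the randomness of carvers lying in $B(x,O(\delta))$, and the doubling packing property limits the number of such carvers to $2^{O(\dim(X))}$, so $A_x$ is mutually independent of all but $2^{O(\dim(X))}$ of the other $A_y$'s. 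The \Lovasz Local Lemma then demands $e\cdot\Pr[A_x]\cdot 2^{O(\dim(X))}\le 1$, which holds once $m=O(\veps^{-1}\dim(X)\log\dim(X))$. A positive-probability realization of the $m$-fold process then certifies the existence of a deterministic $\mathcal D$ satisfying both conclusions.

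The main obstacle is aligning Stage~1 with Stage~2: a bare CKR analysis gives only a constant padding probability at the target radius $\delta/(c_0\dim(X))$, which is too weak to fuel a useful Chernoff tail in Stage~2. Overcoming this requires the sharper single-scale distribution of \cite{ABN08}, whose padding probability can be tuned close to $1$ at the desired radius, so that the exponential tail in Stage~2 can absorb the extra $\log\dim(X)$ factor introduced by the LLL dependency degree and produce the stated bound on $m$.
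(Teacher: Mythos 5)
The paper does not actually prove Theorem~\ref{thm:decomp}; it is cited from \cite{GKL03,ABN08,CGT10}, with a footnote delegating the two properties to Lemma~2.7 of \cite{ABN08} and the support-size bound $m$ to an LLL argument sketched there. Your plan reconstructs that cited argument in broad outline (single CKR/ABN-style partition, then amplify to a small multiset via Chernoff plus the \Lovasz\ Local Lemma), so the shape is right, but two of the quantitative steps as you state them do not actually close.

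First, the LLL degree. You claim $A_x$ is mutually independent of all but $2^{O(\dim(X))}$ of the $A_y$'s. But the events $A_y$ are indexed by \emph{all} points $y\in X$, and every $y$ with $d(x,y)=O(\delta)$ shares carver randomness with $x$; there may be unboundedly many such $y$. The standard fix is to pass to a net of spacing $\Theta(\delta/\dim(X))$ (so that padding a net point at radius $r$ certifies padding of nearby points at radius $r/2$), and the number of such net points inside $B(x,O(\delta))$ is, by the packing bound, $\bigl(\Theta(\dim(X))\bigr)^{\dim(X)}=\dim(X)^{O(\dim(X))}$, not $2^{O(\dim(X))}$. This matters because the degree is exactly what forces the $\log\dim(X)$ factor in $m$: with your claimed degree $2^{O(\dim(X))}$, the LLL would already be satisfied by $m=O(\veps^{-1}\dim(X))$ and the $\log\dim(X)$ term in the theorem would be superfluous, whereas with degree $\dim(X)^{O(\dim(X))}$ one needs $\Pr[A_x]\le e^{-\Omega(\dim(X)\log\dim(X))}$, hence $m=\Omega(\veps^{-1}\dim(X)\log\dim(X))$. (Relatedly, for the dependency graph to even have finite degree you need a \emph{local} carving process whose randomness for $x$ is supported on $B(x,O(\delta))$; a global permutation over all of $X$ as in vanilla CKR correlates every pair of events.)

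Second, the alignment of Stage~1 with Stage~2 is more delicate than ``push the padding probability arbitrarily close to $1$ by enlarging $c_0$.'' The per-partition failure probability at radius $r$ from any CKR/GKL/ABN-type single-scale decomposition is $\Theta(r\dim(X)/\delta)$, so at $r=\delta/(c_0\dim(X))$ it is $\Theta(1/c_0)$, an $\veps$-independent constant. For your Chernoff step to yield $\Pr[A_x]\le e^{-\Omega(\veps m)}$ you need the per-partition failure probability to be at most $\veps/2$; with $c_0$ a universal constant and $\veps$ small this simply fails, since the expected number of bad partitions for $x$ is already $\Theta(m/c_0)>\veps m$. Making Chernoff work requires $c_0=\Omega(1/\veps)$, i.e.\ a padding radius of order $\veps\delta/\dim(X)$ rather than $\delta/\dim(X)$. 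You should either make that dependence explicit, or argue why the decomposition of \cite{ABN08} beats the linear $\Theta(r\dim/\delta)$ failure rate at this scale --- your current phrasing silently assumes it does, and that is precisely the point the proof cannot gloss over.
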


\paragraph{Stable distributions.}
The density of a symmetric $p$-stable random variable ($0 < p \le 2$) is
$h(x) = \frac{1}{\pi} \int_0^\infty \cos(tx)e^{-t^p} dt$ \cite[Section 2.2]{Zo86}.
The density function $h(x)$ is unimodal \cite{SY78,Ya78,Ha84} and bell-shaped \cite{Ga84}.
It is well known that
$\frac{c_{p}}{1+x^{p+1}} \le h(x) \le \frac{c_{p}'}{1+x^{p+1}}$ for constants
$c_{p},c_{p}'$ that depend only on $p$ \cite{No12} (and we may ignore the dependence
on $p$ for the purposes of this paper).
Using this approximation for $h(x)$, it is easy to see that for $0 < q < p$ and $p$-stable
random variable $g$,
$\mathbb{E} [g^q]
= \int_0^\infty x^q h(x) dx
\approx \int_0^1 x^q dx + \int_1^\infty x^{q-(p+1)} dx
\approx \frac{1}{p-q}.$
(Here we use the notation $\approx$ in the same sense as $\Theta(\cdot)$.)
Also, for $b>1$,
$\int_0^b x^p h(x) dx
\approx \int_0^1 x^p dx + \int_1^b \frac{1}{x} dx
\approx 1 + \ln b.$
Let $g_1,g_2,\ldots,g_m \in G$ be a set of \iid symmetric $p$-stable random variables, and let $v$
be a real $m$-length vector. A central property of $p$-stable random variables is that
$\sum_{j=1}^m g_jv_j$
is distributed as
$g (\sum_{j=1}^m  v_j^p)^{1/p} = g \|v\|_p$, for all $g \in G$.
When all $g_i \in G$ are normalized as $\mathbb{E}[|g_i|^q]=1$, we have that
$\mathbb{E} \left[ \left| \sum_{j=1}^m g_jv_j \right|^q \right]
= \mathbb{E} [|g|^q] \|v\|_p^q
= \|v\|_p^q.$

\section{Basic transform-preserving embedding}\label{sec:trans}


In Theorem \ref{thm:full} below, we present an embedding which realizes a certain distance
transform with low distortion,
while also reducing dimension to $O(\log n)$ (with dependence on the range and the desired distortion). 
We then demonstrate that the transform itself has several desirable properties.
In the next section, we will use this transform-preserving embedding 
to obtain a range embedding and a snowflake embedding.


\subsection{Embedding}
We first present a randomized embedding into a single coordinate, and show that it is
transform-preserving in expectation only (Lemma \ref{lem:H}). 
We then show that a concatenation of many single-coordinate embeddings yields a
single embedding which (approximately) preserves the bounded transform with 
high probability, and this gives Theorem \ref{thm:full}.

The following single-coordinate embedding is inspired by the Nash device of Bartal \etal \cite{BRS11} 
(see also \cite{RR07}), and is related to the spherical threshold function of Mendel and Naor \cite[Lemma 
5.9]{MN04}. Broadly speaking, 
our embedding uses the sine function as a dampening tool, which serves to mitigate undesirable 
properties of $p$-stable distributions (i.e., their heavy tails). Our central contribution 
is to give a tight analysis of the transform preserved by our embedding.\footnote{%
It may be possible to replace our embedding with that of Mendel and Naor \cite{MN04}, but one would still 
require a tighter analysis, and the final dimension would likely increase.
Note also that our embedding is into the reals, while \cite{MN04} 
embed into the complex numbers; as $\ell_p^d$ over $\mathbb{C}$ embeds into 
$\ell_p^{O(\epsilon^{-2}\sqrt{p}2^{p/2}d)}$ over $\mathbb{R}$ with distortion $1+\epsilon$ (a 
consequence of Dvoretzky's theorem \cite{MS86}), the embedding of \cite{MN04} can in fact be 
used to achieve an embedding into the reals with increased dimension.
}

Let $S \subset \ell_p$ be a set of $m$-dimensional vectors for $1\le p \le 2$. Let
$g_j \in G$ be $k$ i.i.d.\ symmetric $p$-stable random variables, and fix parameters $s$ (the
{\em threshold}) and $0 \le \phi \le 2\pi$. Further define the fixed constant
$P_q = \mathbb{E}[|\cos \theta|^q] = \frac{1}{2\pi} \int_0^{2\pi} |\cos \theta|^q d\theta$
for $1 \le q \le p$ (and note that
$\frac{1}{2}
= \frac{1}{2\pi} \int_0^{2\pi} \cos^2 \theta d\theta
\le P_q
\le \frac{1}{2\pi} \int_0^{2\pi} |\cos \theta| d\theta
= \frac{2}{\pi}$).
Then the embedding $F_{\phi,s}: S \to L_q^1$ (i.e., into a single coordinate of $L_q$) for
vector $v \in S$ is defined by

\begin{equation}
F_s(v) = F_{\phi,s}(v) = \frac{s}{2P_q^{1/q}} \sin \left( \phi+\frac{2}{s}\sum_{i=1}^m g_iv_i \right).
\end{equation}

Note that $0 \le F_s(v) < s$. For vectors $v,w \in S$ we have that

\begin{eqnarray*}
|F_s(v)-F_s(w)|^q
&=& \frac{s^q}{2^qP_q} \left|
\sin \left( \phi+\frac{2}{s}\sum_{i=1}^m g_iv_i \right)
- \sin \left( \phi+\frac{2}{s}\sum_{i=1}^m g_iw_i \right)
\right|^q	\\
&=& \frac{s^q}{P_q} \left|
\sin \left( \frac{1}{s} \sum_{i=1}^m g_i(v_i-w_i) \right)
\cos \left( \phi + \frac{1}{s} \sum_{i=1}^m g_i(v_i+w_i) \right)
\right|^q.
\end{eqnarray*}

Now, when $\phi$ is a random variable chosen uniformly from the range $[0,2\pi]$
and independently of set $G$, we have that

\begin{eqnarray*}
\mathbb{E}[|F_s(v)-F_s(w)|^q]
&=& \frac{s^q}{P_q} \mathbb{E} \left[ \left|
	\sin \left( \frac{1}{s} \sum_{i=1}^m g_i(v_i-w_i) \right)
	\cos \left( \phi + \frac{1}{s} \sum_{i=1}^m g_i(v_i+w_i) \right)
	\right|^q \right]	\\
&=& \frac{s^q}{P_q} \mathbb{E} \left[ \left|
	\sin \left( \frac{1}{s} \sum_{i=1}^m g_i(v_i-w_i) \right)
	\cos \left( \phi \right)
	\right|^q \right]	\\
&=& \frac{s^q}{P_q} \mathbb{E} 
	\left[ \left|
	\sin \left( \frac{1}{s} \sum_{i=1}^m g_i(v_i-w_i) \right) 
	\right|^q \right]
	\cdot
	\mathbb{E} \left[ \left| 
	\cos \left( \phi \right) 
	\right|^q \right]	\\
&=& s^q \mathbb{E} \left[ \left|
	\sin \left( \frac{1}{s} \sum_{i=1}^m g_i(v_i-w_i) \right)
	\right|^q \right]
\end{eqnarray*}
where the second equality follows from the periodicity of the cosine function, and
the independence of $\phi$ and $G$.

This is our single-coordinate embedding. In Lemma \ref{lem:H} below, we will describe its behavior on
interpoint distances -- that is, we derive useful bounds on 
$\mathbb{E}[|F_s(v)-F_s(w)|^q$ . 
Now $\frac{1}{s} \sum_{i=1}^m g_i(v_i-w_i)$ is distributed as $g \frac{\|v-w\|_p}{s}$ 
for random variable $g \in G$, so we will set $a = \frac{\|v-w\|_p}{s}$ and will
derive bounds for 
$H(a) = \mathbb{E}[|\sin(ag)|^q] = s^{-q} \mathbb{E}[|F_s(v)-F_s(w)|^q]$.
But first we introduce the full embedding $f$, which
is a scaled concatenation of $k$ single-coordinate embeddings: Independently for each coordinate
$i$, create a function $F_{\phi_i} = F_{\phi_i,s}$ by fixing a random angle $0 \le \phi_i \le 2\pi$ and a
family of $k$ i.i.d.\ symmetric $p$-stables. Then coordinate $f(v)_i$ is defined by
$$f(v)_i = k^{-1/p} F_{\phi_i,s}(v)$$ which can be constructed in $O(m)$ time per coordinate.
Note that for $t = \| v-w \|_p$,
$\E[|f(v)_i-f(w)_i|^q]
= \frac{1}{k} \E[|F_{\phi_i}(v) - F_{\phi_i}(w)|^q]
= \frac{s^q}{k} H(t/s)$, and so
$$
\E[\| f(v)-f(w) \|_p^q]
= s^q H(t/s).
$$
Below, we will show that with high probability $f$ is 
is transform-preserving with respect to $H$
with low distortion.


\subsection{Analysis of basic embedding}

We now show that both the single-coordinate embedding and the full embedding
have desirable properties.
Recall that $h(u)$ is the density function of $p$-stable random variables. Set 
$$Q = 2\int_0^\infty u^q h(u) du \approx \frac{1}{p-q}.$$ 
Also, for $a<1$ and some fixed $a^2<\veps<1$, set
$$Q_a
= \frac{1}{2}\int_0^{\sqrt{\veps}/a} u^p h(u) du
= \frac{1}{2} \left[ \int_0^1 u^p h(u) du + \Theta(\ln(\sqrt{\veps}/a)) \right]
\approx 1 + \ln(\sqrt{\veps}/a).$$

\begin{lemma}\label{lem:H}
Let $g$ be a symmetric $p$-stable random variable.
For $1 \le q \le p \le 2$ and any fixed $0 < \veps < \frac{1}{2}$,
$H(a)= \mathbb{E}[|\sin(ag)|^q]$ obeys the following:
\begin{enumerate}
\renewcommand{\theenumi}{(\alph{enumi})}
\item\label{it:lem-threshold}
Threshold: $H(a) \le 1$.
\item\label{it:lem-bilip}
Bi-Lipschitz for small scales: When $q<p$ and
$a \le \min \{
\veps^{\frac{1}{2}+\frac{1}{p-q}},
\sqrt{\veps}(1+(p-q)\veps^{-(q/2+1)})^{-1/(p-q)}
\}$,
we have
$1-O(\veps) \le \frac{H(a)}{a^qQ} \le 1+O(\veps).$
When $q=p$ and
$a \le \sqrt{\veps} e^{-\veps^{-(\frac{q}{2}+1)}}$,
we have
$1-O(\veps) \le \frac{H(a)}{a^q Q_a} \le 1+O(\veps).$
\item\label{it:lem-bilip2}
Bi-Lipschitz for intermediate scales: When $q<p$ 
and $a<1$, let $\delta=1-a^{p-q}$), and we have
$H(a) = \Theta(1+\delta Q)a^q.$
When $q=p$ we have
$H(a) = \Theta \left(1+\ln(1/a) \right) a^q.$
\item\label{it:lem-large}
Lower bound for large scales: When $a \ge 1$, $H(a) > \frac{1}{8}$.
\item\label{it:lem-smooth}
Smoothness: When $a \le 1$,
$\frac{|H((1+\veps)a)-H(a)|}{H(a)} = O(\veps).$
\end{enumerate}
\end{lemma}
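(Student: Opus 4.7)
The starting point is the integral representation $H(a) = 2\int_0^\infty |\sin(au)|^q h(u)\,du$, obtained from the symmetry of the $p$-stable density $h$. Throughout, the plan is to split this integral at a threshold $T$ that separates the ``Taylor regime'' ($au$ small, so $|\sin(au)|^q \approx (au)^q$) from the ``oscillatory regime'' (where we only know $|\sin(au)|^q \le 1$), and to estimate the outer tail using the asymptotic $h(u) = \Theta(u^{-p-1})$ for $u \ge 1$. Each of the five items (a)--(e) is a different quantitative incarnation of this same splitting, so the work is in choosing $T$ and balancing the two resulting error terms.

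Part (a) is immediate from $|\sin|^q \le 1$ and $\int h = 1$. For part (b) I would take $T = \sqrt{\veps}/a$, so that on $u \le T$ one has $au \le \sqrt{\veps}$ and Taylor expansion yields $|\sin(au)|^q = (au)^q(1 \pm O(\veps))$; the inner piece then contributes $a^q \cdot 2\int_0^T u^q h(u)\,du$ up to a relative $O(\veps)$ error. For $q<p$ the inner integral equals $Q/2 - O(T^{q-p}/(p-q))$, so one must push both the inner truncation error $a^q T^{q-p}$ and the crude outer bound $T^{-p}$ below $\veps \cdot a^q Q$; balancing these two separate requirements is exactly what produces the two $\min$ terms in the hypothesis on $a$. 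For $q=p$ the inner integral matches $Q_a$ by definition, while the tail is $O(a^q/\veps^{p/2})$, which must be $O(\veps a^q Q_a)$; this forces $Q_a \gtrsim \veps^{-q/2-1}$, i.e.\ the super-exponentially small bound $a \le \sqrt{\veps} e^{-\veps^{-q/2-1}}$.

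Part (c) I would handle by the same splitting but with the coarser threshold $T = 1/a$: the identity $\int_{1/a}^\infty u^q h(u)\,du \sim a^{p-q}/(p-q)$ makes the inner piece equal to $a^q \delta Q/2$ for $q<p$, and to $\Theta(a^q(1+\ln(1/a)))$ for $q=p$, while the outer tail $\int_{1/a}^\infty h(u)\,du = O(a^p)$ is absorbed by the main term since $a \le 1$ and $p \ge q$; the lower bound on the inner piece uses the elementary $\sin(au) \ge (2/\pi) au$ valid on $au \in [0,1]$. For part (d) I would use $|\sin x|^q \ge \sin^2 x$ (valid for $q \le 2$ and $|\sin x| \le 1$) together with the characteristic identity $\mathbb{E}[\cos(tg)] = e^{-|t|^p}$ to obtain $H(a) \ge (1 - e^{-(2a)^p})/2 \ge (1 - e^{-2})/2 > 1/8$ whenever $a \ge 1$ and $p \ge 1$. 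Part (e) follows from (c) by writing $H(a) = c(a) a^q$ with $c(a) = \Theta(1 + \delta Q)$ or $\Theta(1 + \ln(1/a))$ and checking that $ac'(a) = O(c(a))$ in both regimes, so a first-order expansion gives $H((1+\veps)a)/H(a) = 1 + O(\veps)$.

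The principal obstacle I anticipate is the $q=p$ case of part (b). Because $\int_0^\infty u^p h(u)\,du$ diverges, both the normalizing constant $Q_a$ and the admissible tail error depend non-trivially on $a$ and $\veps$ simultaneously, and the approximation only becomes meaningful once $Q_a$ itself dominates a quantity growing like $\veps^{-q/2-1}$; tracking the implicit constants so that a clean $1 \pm O(\veps)$ statement holds uniformly across $1 \le q \le p \le 2$ will require careful bookkeeping. The other cases essentially reduce to straightforward estimates around a convergent integral, and part (e) is largely a free corollary of the closed-form bounds established in (b) and (c).
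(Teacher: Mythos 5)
Parts (a)--(c) follow the same route as the paper: split the defining integral at a threshold, use the Taylor expansion of $\sin$ on the inner piece, and control the outer tail via $h(u)=\Theta(u^{-p-1})$. Your identification of the two constraints on $a$ in part (b) as arising from the inner-truncation error and the outer tail bound is accurate, and the $q=p$ discussion correctly isolates the super-exponential smallness coming from the condition $Q_a\gtrsim\veps^{-q/2-1}$. (Small slip in part (c): the inner piece for $q<p$ is $\Theta\bigl(a^q(\tfrac{1}{q+1}+\tfrac{\delta}{p-q})\bigr)$, not $a^q\delta Q/2$; the additive constant is essential when $a$ is close to $1$, since then $\delta\to 0$ but $H(a)$ stays $\Theta(1)$. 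You recover the correct $\Theta(1+\delta Q)$ form when you re-use it in part (e), so this is likely just informal phrasing.)

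Part (d) is a genuinely different and strictly cleaner argument than the paper's. The paper lower-bounds $H(a)$ by restricting to countably many subintervals $[(i\pi+\theta)/a,((i+1)\pi-\theta)/a]$ where $|\sin(au)|\ge\sin\theta$, invoking monotonicity of $h$ to control the mass lost, and then optimizing over $\theta$. Your route --- using $|\sin x|^q\ge\sin^2 x$ for $q\le 2$, then $\E[\sin^2(ag)]=\tfrac12(1-\E[\cos(2ag)])=\tfrac12(1-e^{-(2a)^p})$ via the $p$-stable characteristic function --- is shorter, avoids the monotonicity-of-$h$ input entirely, and gives a stronger constant ($\ge\tfrac12(1-e^{-2})\approx 0.43$ rather than $1/8$) for $a\ge 1,\,p\ge 1$. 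This is a nice simplification.

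There is a real gap in your plan for part (e). You write that (e) ``follows from (c) by writing $H(a)=c(a)a^q$ \dots and checking that $ac'(a)=O(c(a))$,'' and call it ``largely a free corollary.'' But part (c) only furnishes two-sided $\Theta$-bounds on $c(a)=H(a)/a^q$; a $\Theta$-bound on a function gives no control whatsoever on its derivative. To actually establish $ac'(a)=O(c(a))$ you would have to differentiate under the integral sign, and the crude bound $\bigl|\tfrac{d}{da}|\sin(au)|^q\bigr|\le qu$ is useless because $\int_0^\infty u\,h(u)\,du$ diverges for $p\le 1$. The paper's proof of (e) is where the real work lies: it bounds $H((1+\veps)a)-H(a)$ directly by expanding $\sin((1+\veps)au)$ via the angle-addition formula, applies the mean-value theorem to $|\cdot|^q$, and splits at the two thresholds $1/a$ and $1/(\veps a)$, keeping the $|\sin(\veps a u)|$ factor throughout precisely so that the $u\,h(u)$ blow-up never appears. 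That analysis is not a consequence of (c); it is independent work of comparable difficulty to (b) and (c), and your outline should not present it as a corollary.
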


It follows that the distance transform implied by our single-coordinate
embedding $F_{s}$ achieves the bounds of Lemma \ref{lem:H} scaled by $s^q$,
if only in expectation. 
Item \ref{it:lem-bilip} implies that for very small $a$ (i.e., when the inter-point
distance under consideration is sufficiently small with respect to the
parameter $s$) the embedding has very small expected distortion 
(at least when $q<p$).
Weaker distortion bounds hold for distances smaller than $s$
(item \ref{it:lem-bilip2}).
For distances greater than $s$, these are contracted to 
the threshold (item \ref{it:lem-threshold}) or
slightly smaller (item \ref{it:lem-large}). 
The smoothness property will be useful for
constructing the snowflake in Section \ref{sec:snowflake}.
We proceed to consider the full embedding:

\begin{theorem}\label{thm:full}
Let $1 \le q \le p \le 2$ and $0 < \veps < \frac{1}{2}$,
and consider an $n$-point set $S \in l^m_p$. Set threshold $s>1$.
Then with constant probability the oblivious embedding $f: S \to l_q^k$ for
$k= O \left( \frac{\log n}{\veps^2} \cdot \min \left\{
s^{2q}, \max \left\{ \frac{s^{2q-p}}{2q-p},\veps s^q \right\}
\right\} \right)$,
satisfies the following for each point pair $v,w \in S$, where $t=\| v-w \|_p$:
\begin{enumerate}
\renewcommand{\theenumi}{(\alph{enumi})}
\item \label{it:thm-threshold}
Threshold:
$\| f(v)-f(w) \|_q^q < s^q.$
\item \label{it:thm-bilip}
Bi-Lipschitz for large scales:
When $t \ge 1$, we have
$$(1-\veps) s^qH(t/s)
\le \|f(v)-f(w)\|_q^q
\le (1+\veps) s^qH(t/s).$$
\item \label{it:thm-expansion}
Bounded expansion for small scales: When
$t < 1$, we have
$$\|f(v)-f(w)\|_q^q \le s^qH(1/s) + \veps.$$
\end{enumerate}
The embedding can be constructed in $O(mk)$ time per point.
\end{theorem}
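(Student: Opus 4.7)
The plan is to prove Theorem~\ref{thm:full} by applying Bernstein's inequality to the sum $Y := \|f(v)-f(w)\|_q^q = \sum_{i=1}^{k} X_i$, where $X_i := |f(v)_i - f(w)_i|^q$, and then taking a union bound over all $\binom{n}{2}$ pairs $v,w \in S$. The single-coordinate calculation preceding the theorem already delivers $\E[Y] = s^q H(t/s)$, so the only thing left to show is concentration.

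First I would collect the two inputs Bernstein needs. The deterministic range bound $|F_{\phi,s}(\cdot)| \le s/(2P_q^{1/q})$ immediately gives $0 \le X_i \le s^q/(P_q k^{q/p}) =: M$. For the variance bound, I would repeat the averaging argument that produced $\E[Y]$ but with exponent $2q$ in place of $q$: independence of the uniform phase $\phi$ from the $p$-stables lets the cosine factor average to the constant $P_{2q}$, and $|\sin|^{2q}\le|\sin|^q$ bounds the residual $\E[|\sin(ag)|^{2q}]$ by $H(a)$. Combining this with the pointwise bound $X_i\le M$ through $\E[X_i^2]\le M\cdot\E[X_i]$ (which is tighter here than the direct second-moment computation) yields
\[
V := \sum_{i=1}^{k}\var(X_i) \;\le\; O\bigl(s^{2q}\,H(t/s)\,k^{-q/p}\bigr).
\]
Bernstein's inequality then gives $\Pr[\,|Y-\E Y| > \tau\,]\le 2\exp\bigl(-\Omega\bigl(\min(\tau^2/V,\,\tau/M)\bigr)\bigr)$.

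Next I would handle the three conclusions separately by choosing $\tau$ appropriately. For (b), take the multiplicative error $\tau = \veps\cdot s^q H(t/s)$; both $\tau^2/V$ and $\tau/M$ reduce to a multiple of $\veps H(t/s) k^{q/p}$, and the hypothesis $t\ge 1$ combined with the lower bounds on $H(t/s)$ from Lemma~\ref{lem:H}\ref{it:lem-bilip2}--\ref{it:lem-large} (split further into $1\le t\le s$ and $t>s$) lets one solve for the announced $k$. For (c), I would instead use the additive choice $\tau = \veps$; since $H$ is monotonically increasing on $[0,1/s]$ (via the expressions $a^qQ$ when $q<p$ and $a^q Q_a$ when $q=p$ from Lemma~\ref{lem:H}\ref{it:lem-bilip}), we have $\E[Y] = s^q H(t/s)\le s^q H(1/s)$ for $t<1\le s$, and the Bernstein bound with this much weaker $\tau$ is satisfied automatically. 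Part (a) then follows from combining (b) and (c) with $H\le 1$ (Lemma~\ref{lem:H}\ref{it:lem-threshold}), possibly after rescaling $\veps$ by a constant. Finally, a union bound over the $\binom{n}{2}$ pairs absorbs a factor $n^2$ into the $\log n$ term of $k$ and yields constant overall success probability.

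The main obstacle will be the careful case analysis that produces the exact min/max structure of $k$. The three terms in the claimed bound correspond to which side of Bernstein dominates (the variance term versus the range term, with the crossover at $2q=p$) and to which of the three regimes of Lemma~\ref{lem:H} governs the lower bound on $H(t/s)$ (controlled by whether $t\lessgtr s$ and whether $q\lessgtr p$); matching $k = O\bigl(\tfrac{\log n}{\veps^2}\cdot\min\{s^{2q},\max\{s^{2q-p}/(2q-p),\veps s^q\}\}\bigr)$ exactly requires tracking all of these regimes simultaneously, which is where most of the technical work lies.
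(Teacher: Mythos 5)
Your skeleton is the same as the paper's: the paper also treats $\|f(v)-f(w)\|_q^q$ as an average of i.i.d.\ bounded per-coordinate terms with mean $s^qH(t/s)$, applies Hoeffding (Claim~\ref{clm:hoeffding}) to get the $s^{2q}$ branch and Bennett (Claim~\ref{clm:bennett}) -- essentially your Bernstein step -- to get the other branch, handles $t<1$ by an additive deviation $\veps$ exactly as you do, and finishes with a union bound over pairs and the lower bounds on $H$ from Lemma~\ref{lem:H}.

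The genuine gap is your variance bound, and it sits exactly where the claimed dimension comes from. You take $\E[X_i^2]\le M\,\E[X_i]$, i.e.\ $\sigma^2\lesssim s^{2q}H(a)$ per coordinate (with $a=t/s$), and assert this is tighter than the direct second-moment computation; it is the opposite. The direct computation, which the paper carries out inside its Bennett step, gives for $a<1$ and $2q>p$ that $\E[|\sin(ag)|^{2q}]\approx \tfrac{a^{2q}}{2q+1}+\tfrac{a^p}{2q-p}+\tfrac{a^p}{p}\approx\tfrac{a^p}{2q-p}$, which for $q<p$ and small $a$ is smaller than $H(a)\approx a^q(1+\delta Q)$ by a factor of order $a^{p-q}(2q-p)$. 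With your cruder bound, the variance side of Bernstein forces $k=\Omega\bigl(\tfrac{\log n}{\veps^2 H(t/s)}\bigr)$, which at the worst scale $t=1$ is roughly $\tfrac{s^q\log n}{\veps^2(1+1/(p-q))}$; this exceeds the claimed $\tfrac{\log n}{\veps^2}\max\bigl\{\tfrac{s^{2q-p}}{2q-p},\veps s^q\bigr\}$ by a factor as large as $\Theta\bigl(\min\{1/\veps,\ (2q-p)s^{p-q}\}\bigr)$ (e.g.\ $p=2$, $q=3/2$, $s$ large loses a full $1/\veps$). So your high-level attribution of the min/max structure to ``variance term versus range term'' is right, but the $\tfrac{s^{2q-p}}{2q-p}$ term is only reachable by computing $\E[|\sin(ag)|^{2q}]$ directly; with the bound you propose, the theorem is proved only with a strictly larger $k$.

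Two smaller points. Item~\ref{it:thm-threshold} does not follow from (b), (c) and $H\le 1$: that route gives only $(1+\veps)s^qH(t/s)\le(1+\veps)s^q$, and rescaling $\veps$ does not push this below $s^q$ unless you additionally show $H$ is bounded away from $1$; the paper instead gets the threshold deterministically (not just with high probability) from the pointwise bound on each coordinate $F_{\phi,s}$, which is the easy fix. Also, your exponents $k^{q/p}$ in $\tau/M$ and $\tau^2/V$ come from taking the paper's normalization $k^{-1/p}$ literally; for $\|f(v)-f(w)\|_q^q$ to be the empirical mean whose expectation is $s^qH(t/s)$ (as both you and the paper use), the per-coordinate scaling should be $k^{-1/q}$, and then these quantities are proportional to $k$ as needed.
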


Theorem \ref{thm:full} demonstrates that in a certain range, there
exists a transform-preserving embedding with high probability.
(We note that when $2q$ is close to $p$, better dimension bounds can be obtained 
by embedding into an interim value $q+c$ for some $c$ and then embedding into $q$.)

\begin{proof}[Proof of Lemma \ref{lem:H}]
{\bf Item \ref{it:lem-threshold}.}
This follows trivially from the fact that $|\sin(x)| \le 1$.

\noindent {\bf Item \ref{it:lem-bilip}.}
Note that since the density function $h$ is symmetric about 0,
\begin{equation}\label{eq:terms}
H(a)
= 2 \int_0^{\sqrt{\veps}/a} |\sin(au)|^{q} h(u) du
+ 2 \int_{\sqrt{\veps}/a}^\infty |\sin(au)|^{q} h(u) du.
\end{equation}

We show that under the conditions of the item, the second term in Equation (\ref{eq:terms}) is
dominated by the first. Considering the second term, we have that
\begin{equation}
2 \int_{\sqrt{\veps}/a}^\infty |\sin(au)|^{q} h(u) du
< 2\int_{\sqrt{\veps}/a}^\infty h(u) du
< 2c_p' \int_{\sqrt{\veps}/a}^\infty \frac{1}{u^{p+1}} du
= \frac{2c_p'}{p} \left( \frac{a}{\sqrt{\veps}} \right)^{p}.
\end{equation}

Turning to the first term, recall the Taylor series expansion
$\sin(x) = x - \frac{x^3}{3!} + \frac{x^5}{5!} - \ldots$;
so when $x < \sqrt{\veps}$ we have that $x(1-\frac{\veps}{6}) < \sin(x) < x$.
Also note that the conditions of the item give that $a < \sqrt{\veps}$,
and so $\frac{\sqrt{\veps}}{a} >1$. Hence, when $q<p$ we have

\begin{eqnarray*}
2 \int_0^{\sqrt{\veps}/a} |\sin(au)|^{q} h(u) du
&>&  2 (1-\veps/6)^{q} a^q \int_0^{\sqrt{\veps}/a} u^q h(u) du \\
&>&  2 (1-\veps/3) a^q c_p \left[
      \int_0^1 \frac{u^q}{1+u^{p+1}} du
    + \int_1^{\sqrt{\veps}/a} \frac{u^q}{1+u^{p+1}} du	
      \right]	\\
&>&  2 (1-\veps/3) a^q c_p \left[
      \int_0^1 \frac{u^q}{2} du
    + \int_1^{\sqrt{\veps}/a} \frac{u^{q-p-1}}{2} du	
      \right]	\\
&=&  (1-\veps/3) a^q c_p \left[
      \frac{1}{q+1} + \frac{1-(a/\sqrt{\veps})^{p-q}}{p-q}
      \right]
\end{eqnarray*}

When $q=p$, the same analysis gives

\begin{eqnarray*}
2 \int_0^{\sqrt{\veps}/a} |\sin(au)|^{q} h(u) du
&>&  (1-\veps/3) a^q c_p \left[
      \frac{1}{q+1} + \ln(\sqrt{\veps}/a)
      \right]
\end{eqnarray*}

We proceed to show that the first term of Equation (\ref{eq:terms}) exceeds the second by a factor of
$\Omega(\veps^{-1})$. For $q=p$ this holds trivially when
$a^q \ln(\sqrt{\veps}/a) \ge \frac{1}{\veps} \left( \frac{a}{\sqrt{\veps}} \right)^p$ --
or equivalently, when $\ln(\sqrt{\veps}/a) \ge \veps^{-(\frac{q}{2}+1)}$ --
which in turn holds exactly when $a \le \sqrt{\veps} e^{-\veps^{-(\frac{q}{2}+1)}}$.
For $q<p$, the condition is fulfilled when
$a^q \ge \frac{1}{\veps} \left( \frac{a}{\sqrt{\veps}} \right)^p$, which holds exactly when
$a \le \veps^{\frac{p/2+1}{p-q}}$.
Better, the condition is also fulfilled when
$a^q \left[ \frac{1-(a/\sqrt{\veps})^{p-q}}{p-q} \right] \ge \frac{1}{\veps} \left( \frac{a}{\sqrt{\veps}} \right)^p$,
and this is equivalent to satisfying
$a^{p-q} \left[ \frac{p-q}{\veps^{\frac{p}{2}+1}} + \frac{1}{\veps^{(p-q)/2}} \right]
\le 1$; this holds exactly when
$a \le \sqrt{\veps}(1+(p-q)\veps^{-(q/2+1)})^{-1/(p-q)}$.

As the first term of Equation (\ref{eq:terms}) dominates the second, 
it follows that when $q<p$ then for some constant $c$
\begin{eqnarray*}
H(a)
&=& 2 \int_0^{\sqrt{\veps}/a} |\sin(au)|^{q} h(u) du
	+ 2 \int_{\sqrt{\veps}/a}^\infty |\sin(au)|^{q} h(u) du	\\
&\le& 2(1+c\veps) \int_0^{\sqrt{\veps}/a} |\sin(au)|^{q} h(u) du	\\
&<&   2(1+c\veps) a^q \int_0^{\infty} u^q h(u) du 	\\
&=&   (1+c\veps) a^q Q.
\end{eqnarray*}

Further, we have that
\begin{eqnarray*}
H(a)
&=& 2 \int_0^{\sqrt{\veps}/a} |\sin(au)|^{q} h(u) du
	+ 2 \int_{\sqrt{\veps}/a}^\infty |\sin(au)|^{q} h(u) du	\\
&>& 2\int_0^{\sqrt{\veps}/a} |\sin(au)|^{q} h(u) du    \\
&>& 2(1-\veps/3) a^q \int_0^{\sqrt{\veps}/a} u^q h(u) du	\\
&>& (1-\veps/3)(1-c' \veps) a^q Q,
\end{eqnarray*}
where the final inequality follows from noting that since
$a \le \veps^{\frac{1}{2}+\frac{1}{p-q}}$,
$\int_{\sqrt{\veps}/a}^\infty u^q h(u) du
\le \int_{\veps^{-1/(p-q)}}^\infty u^q h(u) du
\approx \frac{\veps}{p-q}
\approx \veps Q$,
and so
$\int_0^{\sqrt{\veps}/a} u^q h(u) du
= \int_0^\infty u^q h(u) du - \int_{\sqrt{\veps}/a}^\infty u^q h(u) du
> (1-c'\veps)Q$
for some $c'$.

This completes the analysis for $q<p$. The same analysis gives that when $q=p$,
$$H(a) < 2(1+c\veps) a^q \int_0^{\sqrt{\veps}/a} u^q h(u) du
= (1+c\veps) a^q Q_a,$$
and
$$H(a) > 2(1-\veps/3) a^q \int_0^{\sqrt{\veps}/a} u^q h(u) du
= (1-\veps/3) a^q Q_a.$$

\noindent{\bf Item \ref{it:lem-bilip2}.} The analysis is similar to that presented in the proof of Item
\ref{it:lem-bilip}. Noting that when $0 \le x \le 1$, $\sin(x) \approx x$, and recalling that under the conditions of
the item $a \le 1$, we have for $q<p$ that

\begin{eqnarray*}
H(a)
&=&	2 \int_0^{1/a} |\sin(au)|^{q} h(u) du
	+ 2 \int_{1/a}^\infty |\sin(au)|^{q} h(u) du. \\
&=& 	\Theta \left( a^q \int_0^{1/a} u^{q} h(u) du \right)
	+ O \left( \int_{1/a}^\infty h(u) du \right)   \\
&=&     \Theta \left( \left( 1+\frac{1-a^{p-q}}{p-q} \right) a^q \right)
	+ O(a^p) \\
&=&     \Theta \left( \left( 1+\frac{\delta}{p-q} \right) a^q \right)
	+ O(a^p) \\
&=& 	\Theta \left( \left( 1+\frac{\delta}{p-q} \right) a^q \right).
\end{eqnarray*}

Similarly, for $q=p$ we have

\begin{eqnarray*}
H(a) =
\Theta \left( a^q \int_0^{1/a} u^{p} h(u) du \right)
+ O \left( \int_{1/a}^\infty h(u) du \right)
= \Theta \left( a^q \int_0^{1/a} u^{p} h(u) du \right)
\approx \left( 1+\ln(1/a) \right) a^p.
\end{eqnarray*}

\noindent{\bf Item \ref{it:lem-large}.}
First note that when $p \ge 1$,
$h(x)
= \frac{1}{\pi} \int_{0}^\infty \cos(tx) e^{-t^p} dt
< \frac{1}{\pi} \int_{0}^\infty e^{-t^p} dt
< \frac{1}{\pi}$, so
$\int_0^b h(u) du < \frac{b}{\pi}$.
Since $h(x)$ is a symmetric density function, we have
$\int_0^\infty h(u) du = \frac{1}{2}$,
and so
$\int_b^\infty h(u) du > \frac{1}{2} - \frac{b}{\pi}$.
We have for any $0 < \theta < \frac{\pi}{2}$,
\begin{eqnarray*}
H(a)
&\ge&	2 \int_{\theta/a}^{\infty} |\sin(au)|^{q} h(u) du	\\
&>&	2 \sum_{i=0}^\infty \int_{\frac{i \pi + \theta}{a}}^{\frac{(i+1)\pi - \theta}{a}} |\sin(au)|^{q} h(u) du	\\
&>&	2 |\sin(\theta)|^q \sum_{i=0}^\infty \int_{\frac{i \pi + \theta}{a}}^{\frac{(i+1)\pi - \theta}{a}} h(u) du	\\
&>&	2 |\sin(\theta)|^q \left[1 - \frac{2\theta}{\pi} \right] 
	\sum_{i=0}^\infty \int_{\frac{i \pi + \theta}{a}}^{\frac{(i+1)\pi + \theta}{a}} h(u) du	\\
&=&	2 |\sin(\theta)|^q \left[1 - \frac{2\theta}{\pi} \right] \int_{\theta/a}^\infty h(u) du	\\
&>&	|\sin(\theta)|^q \left[1 - \frac{2\theta}{\pi} \right] \left[ 1 - \frac{2\theta}{a\pi} \right].
\end{eqnarray*}
Where the fourth inequality follows from the fact that
$h(x)$ is monotone decreasing for $x \ge 0$.
The claimed result follows by taking $a=1$ (the maximum value of $a$) 
and $\theta = \frac{\pi}{4}$, and recalling that $q \le 2$.

\noindent{\bf Item \ref{it:lem-smooth}.}
As in the proof of Item \ref{it:lem-bilip2} above, we have that for $q \le p$
and $a \le 1$,
$H(a)
\approx  \int_0^{1/a} |\sin(au)|^{q} h(u) du + \int_{1/a}^\infty h(u) du
=  \int_0^{1/a} |\sin(au)|^{q} h(u) du + O(a^p)
\approx  \int_0^{1/a} |\sin(au)|^{q} h(u) du$.
Now recall the Taylor series expansion
$\cos(x) = 1 - \frac{x^2}{2!} + \frac{x^4}{4!} - \ldots > 1-\frac{x^2}{2}$ (when $0 \le x \le 1$),
and note that as a consequence of the Mean Value Theorem,
$\left| |A|^q - |B|^q \right| = qC^{q-1}||A|-|B||$ for some $|A| \ge C \ge |B|$.
Further noting that when $u \le \frac{1}{a}$ we have $au \le 1$
and that when $u \le \frac{1}{\veps a}$ we have $\veps au \le 1$,
we conclude that
\begin{eqnarray*}
H(a(1+\veps)) - H(a)
&=& 	2\int_0^{\infty} [|\sin(a(1+\veps)u)|^q - |\sin(au)|^q] h(u) du							\\
&\le &	2\int_0^{\infty} q [\max \{ |\sin(a(1+\veps)u)|,|\sin(au)| \}]^{q-1} \cdot					\\
& &	[||\sin(a(1+\veps)u)| - |\sin(au)||] h(u) du	 								\\
&\le&	2\int_0^{\infty} q [\max \{ |\sin(au)\cos(\veps au)|+|\sin(\veps au)\cos(au)|,|\sin(au)| \}]^{q-1} \cdot	\\
& &	[||\sin(au)\cos(\veps au)|+|\sin(\veps au)\cos(au)| - |\sin(au)||] h(u) du					\\
&\le& 	2\int_0^{\infty} q [|\sin(au)|^{q-1}+|\sin(\veps au)|^{q-1}][|\sin(\veps au)|] h(u) du				\\
&=&	O \left( 	\int_0^{1/a} \veps |\sin(au)|^q h(u) du
	+ \veps a \int_{1/a}^{1/(\veps a)} u h(u) du	
	+ \int_{1/(\veps a)}^\infty h(u) du 	\right)									\\
&=&	O \left( 	\int_0^{1/a} \veps |\sin(au)|^q h(u) du 	
	+ \veps a^p
	+ \veps^p a^p \right)												\\
&=&	O(\veps H(a))
\end{eqnarray*}

\end{proof}

\begin{proof}[Proof of Theorem \ref{thm:full}]
The first claim of the theorem follows from the fact that for all $i$, $0 \le F_{\phi_i,s}(v) < s$.

To prove the rest of the theorem, we may make use of Hoeffding's inequality. When $1 \le t \le s$,
we have by Lemma \ref{lem:H}\ref{it:lem-bilip}\ref{it:lem-bilip2}\ref{it:lem-large} that $s^qH(t/s) = s^q \Omega(1) =
\Omega(1)$. Then Claim \ref{clm:hoeffding} implies that for some
$k = O \left( \frac{s^{2q}}{\veps^2} \log n \right)$,
we have
\begin{eqnarray*}
\Pr \left[ | \| f(v)-f(w)\|_q^q - s^qH(t/s)| > \veps s^qH(t/s) \right]
&=& \Pr \left[ | \| f(v)-f(w)\|_q^q - s^qH(t/s)| > \veps \Omega(1) \right] 	\\
&\le& \frac{1}{n^2},
\end{eqnarray*}
so this distortion guarantee can hold simultaneously for all point pairs.
When $t < 1$, first note that when $H(t/s) = \Theta(1)$, we can use the same proof as for
$1 \le t \le s$ above. If $H(t/s) = o(1)$, we have
\begin{eqnarray*}
\Pr \left[ \| f(v)-f(w)\|_q^q > s^qH(1/s) + \veps \right]
&=& \Pr \left[ \| f(v)-f(w)\|_q^q - s^qH(t/s) > s^qH(1/s) - s^qH(t/s) + \veps \right]	\\
&<& \Pr \left[ \| f(v)-f(w)\|_q^q - s^qH(t/s) > \veps \right]	\\
&\le& \frac{1}{n^2}.
\end{eqnarray*}

An alternate bound can be derived using Bennett's inequality (Claim \ref{clm:bennett}).
For this, it suffices to take
$k= O \left( \frac{ s^{2q} \log n}{\sigma^2 V \left( \frac{s^q\veps [s^qH(t/s)]}{\sigma^2}\right)} \right)$,
with the variance term $\sigma^2 = \Theta \left( s^{2q} \mathbb{E}[|\sin(ag)|^{2q}] \right)$.
We will prove the case $t \ge 1$, and the case $t<1$ follows as above.
Set $r = \frac{s^q\veps [s^qH(t/s)]}{\sigma^2}$.
If $r \ge 1$, then we have $V(r)=\Omega(r)$, and so
$k
=O \left( \frac{s^q \log n}{\veps [s^qH(t/s)]} \right)
=O \left( \frac{s^q \log n}{\veps} \right)$.
Otherwise $r<1$, and we have $V(r)=\Theta(r^2)$, from which we derive
$k=O \left( \frac{\sigma^2 \log n}{\veps^2 [s^qH(t/s)]^2} \right)$.
Now, if $t \ge s$, we recall by \ref{lem:H}\ref{it:lem-large} that $H(t/s) = \Theta(1)$,
and noting that $\sigma^2 \approx s^{2q} \mathbb{E}[|\sin(ag)|^{2q}] = O(s^{2q})$ we obtain
$k=O \left( \frac{\log n}{\veps^2} \right)$.
When $1 \le t < s$, we have by \ref{lem:H}\ref{it:lem-bilip}\ref{it:lem-bilip2}
that $H(t/s) = \Omega((t/s)^q)$, and so
$k=O \left( \frac{\sigma^2 \log n}{\veps^2 t^{2q}} \right)$.
In this case we require a better bound on $\sigma^2$:
Setting $a=\frac{t}{s}<1$ we have (by analysis similar to the proof of 
Lemma \ref{lem:H})
\begin{eqnarray*}
\sigma^2
&\approx& s^{2q} \mathbb{E}[|\sin(ag)|^{2q}]		\\
&\approx& s^{2q} \left[
a^{2q} \int_0^{1} u^{2q} du
+ a^{2q} \int_1^{1/a} u^{2q-p-1} du
+ \int_{1/a}^{\infty} u^{-p-1} du
\right] \\
&\approx& s^{2q} \left[ \frac{a^{2q}}{2q+1} + \frac{a^p}{2q-p} + \frac{a^p}{p} \right]	\\
&\approx& s^{2q} \frac{a^p}{2q-p}.	
\end{eqnarray*}
It follows that
$k=O \left( \frac{a^{p-2q} \log n}{(2q-p)\veps^2} \right)
=O \left( \frac{s^{2q-p} \log n}{(2q-p)\veps^2} \right)$.
\end{proof}

\section{Range and snowflake embeddings}\label{sec:appext}

In Section \ref{sec:trans} above, we presented our basic embedding.
Here, we show how to use the basic embedding to derive a dimension-reducing
embedding that preserves distances in a fixed range (Section \ref{sec:range}).
We also show that the basic embedding can be used to derive a dimension-reducing
snowflake embedding (Section \ref{sec:snowflake}).
As a precursor to the snowflake embedding, we show that the basic and 
range embeddings can be improved to embed into the doubling dimension of the
space (Section \ref{sec:intdimred}). 


\subsection{Range embedding}\label{sec:range}
By combining Theorem \ref{thm:full} and Lemma \ref{lem:H}
and choosing an appropriate parameter $s$, we can achieve a
dimension-reducing oblivious strong range embedding for $\ell_p$.
This is the central contribution of our paper:

\begin{theorem}\label{thm:dim-reduction}
Let $1 \le q \le p \le 2$ and $0 < \veps < \frac{1}{2}$,
and consider an $n$-point set $S \in l^m_p$.
Fix range $R > 1$ and set threshold $s$ as follows:
When $q<p$,
$s \approx R \veps^{-1/2} (1+(p-q)\veps^{-(q/2+1)})^{1/(p-q)},$
and when $q=p$,
$s \approx \max \{ R^{1/\veps}, R \veps^{-1/2} e^{\veps^{-(\frac{q}{2}+1)}} \}.$
Then there exists an oblivious embedding $f: S \to l_q^k$ for
\linebreak
$k= O \left( \frac{\log n}{\veps^2} \cdot \min \left\{
s^{2q}, \max \left\{ \frac{s^{2q-p}}{2q-p},\veps s^q \right\}
\right\} \right)$,
which satisfies the following for each point pair $v,w \in S$, where $t=\| v-w \|_p$:
\begin{enumerate}
\renewcommand{\theenumi}{(\alph{enumi})}
\item\label{it:thm-dr-threshold}
Threshold: When $q<p$ we have
$\| f(v)-f(w) \|_q^q \le \frac{s^q}{Q}.$

When $q=p$ we have
$\| f(v)-f(w) \|_q^q \le \frac{s^q}{Q_{R/s}}.$
\item\label{it:thm-dr-contract}
Bounded expansion and contraction for large scales:
When $t>R$, we have
$\|f(v)-f(w)\|_q^q = O(t^q), $ and
$\|f(v)-f(w)\|_q^q \ge R^q + \veps.$
\item\label{it:thm-dr-bilip}
Bi-Lipschitz for intermediate scales:
When $1 \le t \le R$, we have
$(1-\veps) t^q
\le \|f(v)-f(w)\|_q^q
\le (1+\veps) t^q .$
\item\label{it:thm-dr-expansion}
Bounded expansion for small scales: When
$t < 1$, we have
$\|f(v)-f(w)\|_q^q \le 1 + \veps.$
\end{enumerate}
The embedding can be constructed in $O(mk)$ time per point.
\end{theorem}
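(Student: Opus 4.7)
The plan is to derive Theorem~\ref{thm:dim-reduction} as a direct corollary of Theorem~\ref{thm:full}, by choosing the threshold $s$ so that the distance scales we care about ($t \in [1,R]$) land squarely in the bi-Lipschitz regime of Lemma~\ref{lem:H}, and by rescaling the basic embedding to absorb the constant $Q$ (respectively $Q_{R/s}$) that Lemma~\ref{lem:H}(b) introduces. Concretely I would instantiate Theorem~\ref{thm:full} with the stated $s$ to obtain $f_0\colon S \to \ell_q^k$, and then let $f = Q^{-1/q} f_0$ when $q<p$ and $f = Q_{R/s}^{-1/q} f_0$ when $q=p$. The dimension and construction time are inherited verbatim.

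The heart of the proof is part~(c). When $q<p$, the choice $s \approx R\veps^{-1/2}(1+(p-q)\veps^{-(q/2+1)})^{1/(p-q)}$ is engineered so that $R/s$ satisfies the second argument of the minimum in the hypothesis of Lemma~\ref{lem:H}(b); hence for every $t \in [1,R]$, $a = t/s \le R/s$ also satisfies it, and Lemma~\ref{lem:H}(b) gives $H(t/s) = (1 \pm O(\veps))(t/s)^q Q$. Feeding this into Theorem~\ref{thm:full}(b) and undoing the rescaling by $Q^{1/q}$ yields $\|f(v)-f(w)\|_q^q = (1 \pm O(\veps))t^q$, which (after rescaling $\veps$ by a constant) is the claimed bi-Lipschitz bound. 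For $q=p$, the larger choice of $s$ (including the $R^{1/\veps}$ term) is needed so that $Q_a$ is approximately constant over $a \in [1/s,R/s]$, i.e., $Q_{t/s} = (1\pm O(\veps)) Q_{R/s}$; the same calculation then goes through with $Q_{R/s}$ in place of $Q$.

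Part~(a) follows immediately from Theorem~\ref{thm:full}(a) after the rescaling. Part~(d) follows from Theorem~\ref{thm:full}(c) applied at $t<1$, combined with Lemma~\ref{lem:H}(b) at $a = 1/s$ which gives $s^q H(1/s) = (1\pm O(\veps))Q$ (resp.\ $(1\pm O(\veps))Q_{1/s}$); the additive $\veps$ slack of Theorem~\ref{thm:full}(c) survives the rescaling as an additional additive $O(\veps)$ term, which is absorbable.

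Part~(b), for $t > R$, I would handle by splitting into subcases on the location of $t/s$ relative to $1$. When $R < t < s$, Lemma~\ref{lem:H}(c) gives $s^q H(t/s) = \Theta((1 + \delta Q)t^q)$ with $\delta = 1-(t/s)^{p-q} \ge 1-(R/s)^{p-q}$ bounded away from $0$ by our choice of $s$; this yields both the $O(t^q)$ upper bound and a lower bound $\Omega(t^q) \ge R^q$ after rescaling, with the extra $\veps$ slack coming from Theorem~\ref{thm:full}(b). When $t \ge s$, Theorem~\ref{thm:full}(a) gives the upper bound $s^q/Q = O(t^q)$, while Lemma~\ref{lem:H}(d) gives $s^q H(t/s) > s^q/8 \gg R^q Q$, well above the claimed lower bound. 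The main obstacle I expect is the lower bound in part~(b) just above the boundary $t=R$: one must verify that the multiplicative $(1+\veps)$ error of Theorem~\ref{thm:full}(b) at $t = R$ does not contaminate the jump to $t > R$, and this is precisely where the quantitative form of Lemma~\ref{lem:H}(c) (and the need to repeat the case analysis for $q=p$ with $Q_{R/s}$) matters.
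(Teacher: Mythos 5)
Your proposal is correct and follows essentially the same route as the paper: instantiate Theorem \ref{thm:full} with the stated threshold $s$, rescale by $Q^{1/q}$ (resp.\ $Q_{R/s}^{1/q}$), invoke Lemma \ref{lem:H}\ref{it:lem-bilip} for $1\le t\le R$ (with the $q=p$ case handled by the near-constancy of $Q_{t/s}$ over the range, forced by $s\gtrsim R^{1/\veps}$), and use Lemma \ref{lem:H}\ref{it:lem-bilip2}\ref{it:lem-large} together with Theorem \ref{thm:full}\ref{it:thm-bilip} for $t>R$. Even the issue you flag near $t=R$ is treated at the same level of detail in the paper (via the monotonicity claim $H(a)>H(R/s)$ for $a>R/s$), so there is no substantive divergence.
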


\begin{proof}
We use the construction of Theorem \ref{thm:full} with the stated value of $s$, and then scale down by a
factor of $Q^{1/q}$ or $Q_{R/s}^{1/q}$. (Note that $Q,Q_{R/s} = \Omega(1)$.) Then the threshold guarantee follows
immediately from Theorem \ref{thm:full}\ref{it:thm-threshold} and the scaling step.

We will now prove the rest of the theorem for the case $p<q$. First, the bi-Lipschitz claim for values
$1 \le t \le R$ follows immediately from Theorem \ref{thm:full}\ref{it:thm-bilip} and
Lemma \ref{lem:H}\ref{it:lem-bilip}, when noting that for an appropriate choice of $s$,
$\frac{t}{s} \le \veps^{1/2} (1+(p-q)\veps^{-(q/2+1)})^{-1/(p-q)}$. Similarly, the bounded expansion
claim for $t<1$ follows from Theorem \ref{thm:full}\ref{it:thm-expansion} and the aforementioned
bi-Lipschitz guarantee at $t=1$. Finally, the bounded expansion for $t>R$ follows from Theorem
\ref{thm:full}\ref{it:thm-bilip}, and the bounded contraction follows from Theorem
\ref{thm:full}\ref{it:thm-bilip} combined with fact that when $a>R/s$, $H(a) > H(R/s)$ (as a consequence
of Lemma \ref{lem:H}\ref{it:lem-bilip2}\ref{it:lem-large} for an appropriate choice of $s$).

For $p=q$, we have essentially the same proof, only noting that the function $Q_a = Q_{t/s}$ is monotone
decreasing in $t$, and since $s = O(R^{1/\veps})$ we have for all $1 \le t \le R$ that
$\frac{Q_{t/s}}{Q_{R/s}}
\le \frac{Q_{1/s}}{Q_{R/s}}
= O \left( \frac{1+\veps^{-1} \log R}{1+ (\veps^{-1} - 1) \log R} \right)
= O(1+\veps)$.
\end{proof}

\subsection{Intrinsic dimensionality reduction}\label{sec:intdimred}

Here we show that the guarantees of Theorem \ref{thm:full} and Theorem \ref{thm:dim-reduction} can be 
achieved by (non-oblivious) embeddings whose target dimension in independent of $n$, and depends only on the 
doubling dimension of the space. 
The following lemma is derived by applying the framework of \cite{GK11} to 
Theorem \ref{thm:full}.

\begin{lemma}\label{lem:fi}
Let $1 \le q \le p \le 2$ and $0 < \veps < \frac{1}{2}$,
and consider an $n$-point set $S \in l^m_p$. Set threshold $s>1$.
Then there exists an embedding $f: S \to l_q^k$ for
$k= \tilde{O} \left( \frac{\ddim^2(S)}{\veps^3} \cdot s \cdot \min \left\{
s'^{2q}, \max \left\{ \frac{s'^{2q-p}}{2q-p},\veps s'^q \right\}
\right\} \right)$ for $s' = \frac{s \ddim (S)}{\veps}$,
which satisfies the following for each point pair $v,w \in S$, where $t=\| v-w \|_p$:
\begin{enumerate}
\renewcommand{\theenumi}{(\alph{enumi})}
\item \label{it:lem-fi-threshold}
Threshold:
$\| f(v)-f(w) \|_q^q < s^q.$
\item \label{it:lem-fi-bilip2}
Bi-Lipschitz for intermediate scales:
When $1 \le t \le s$, we have
$$(1-\veps) s^qH(t/s)
\le \|f(v)-f(w)\|_q^q
\le (1+\veps) s^qH(t/s).$$
\item \label{it:lem-fi-expansion}
Strong bounded expansion for small scales: When
$t < 1$, for some constant $c$
$$\|f(v)-f(w)\|_q^q \le \min \{ (1+\veps)s^qH(1/s), c \ddim(S) t \}.$$
\end{enumerate}
Given a point hierarchy for $S$, the embedding can be constructed in 
$2^{\tilde{O} \ddim(S)} + O(mk)$ time per point.
\end{lemma}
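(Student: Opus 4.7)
The plan is to follow the intrinsic-dimension reduction framework of Gottlieb and Krauthgamer \cite{GK11}: build the embedding scale-by-scale over a hierarchy of nets for $S$, apply Theorem \ref{thm:full} \emph{locally} inside each padded cluster at each scale, and concatenate. The effect is to replace the global $\log n$ factor of Theorem \ref{thm:full} by the local $\log |N_i\cap C| = O(\ddim(S))$ coming from the doubling-dimension packing bound; everything else in the dimension expression carries over unchanged, which explains the shape of the stated $k$.

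First I would construct the standard hierarchy $N_0 = S \supset N_1 \supset \cdots \supset N_t$ of $2^i$-nets; since Theorem \ref{thm:full} already thresholds distances beyond $s$, only $t = O(\log s)$ scales are relevant. For each scale $i$, invoke Theorem \ref{thm:decomp} to obtain a multi-set $\mathcal{D}_i$ of $O(\veps^{-1}\ddim(S)\log\ddim(S))$ padded partitions with cluster diameter $\Theta(s)$ and padding radius $\Theta(2^i/\veps)$. Within each partition $P \in \mathcal{D}_i$ and each cluster $C \in P$, apply Theorem \ref{thm:full} to $N_i \cap C$; by the packing property $|N_i \cap C| = 2^{O(\ddim(S))}$, so the $\log n$ of Theorem \ref{thm:full} contributes only $O(\ddim(S))$ per scale. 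I then extend each local embedding from $N_i\cap C$ to all of $C \cap S$ via a Lipschitz extension (e.g.\ assignment to the nearest $N_i$-representative, smoothed by a Whitney-type partition of unity on the hierarchy), and concatenate the scaled coordinates over clusters, partitions and scales. Multiplying the per-scale dimension by the number of partitions and the $O(\log s)$ scales, and absorbing the logarithmic factors into $\tilde{O}$ and the stated factor of $s$, gives the target $k$.

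Verification of items \ref{it:lem-fi-threshold}--\ref{it:lem-fi-expansion} then proceeds by inheriting the corresponding clauses of Theorem \ref{thm:full}. The threshold \ref{it:lem-fi-threshold} follows coordinate-wise. For the intermediate-scale bi-Lipschitz bound \ref{it:lem-fi-bilip2}, for a pair $v,w$ with $t \in [1,s]$ I focus on the scale $i_\star$ with $2^{i_\star}\approx t$: the padding guarantee of Theorem \ref{thm:decomp} ensures that with constant probability over $\mathcal{D}_{i_\star}$ both points lie in a common padded cluster, so the local Theorem \ref{thm:full} gives the $(1\pm\veps)$ multiplicative bound, while scales $i \neq i_\star$ contribute lower-order error controlled by the smoothness estimate of Lemma \ref{lem:H}\ref{it:lem-smooth}. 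Amplifying via the multi-set $\mathcal{D}_i$ turns the constant-probability event into a deterministic bound over all pairs.

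The main obstacle, and the qualitatively new content relative to Theorem \ref{thm:full}, is the strong expansion bound \ref{it:lem-fi-expansion} for $t<1$. The first term $(1+\veps)s^qH(1/s)$ is immediate from Theorem \ref{thm:full}\ref{it:thm-expansion} composed with the scaling. The second term $c\,\ddim(S)\,t$ is a genuine Lipschitz statement: it requires that the assembled embedding be $O(\ddim(S))$-Lipschitz at small scales. This is where the net/decomposition structure is crucial. For $t<1$, the pair $v,w$ shares every $N_i$-ancestor at scales $2^i\gtrsim 1$, so only the $O(1)$ scales with $2^i \lesssim 1$ contribute any cross-cluster motion; on each such scale the local embedding is Lipschitz with constant $O(\ddim(S))$, coming from the packing bound on the number of cluster siblings that $v$ and $w$ can straddle, together with the smoothness of $H$ at small arguments (Lemma \ref{lem:H}\ref{it:lem-smooth}). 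Summing the resulting geometric series in the scale yields the $O(\ddim(S)\cdot t)$ bound. The construction-time claim follows from the $2^{O(\ddim(S))} n$ bound on the hierarchy size (cf.\ the \textbf{Nets and hierarchies} paragraph in Section \ref{sec:preliminaries}) and the per-coordinate $O(m)$ cost inherited from the basic embedding of Theorem \ref{thm:full}.
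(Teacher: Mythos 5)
Your high-level idea is right --- localize Theorem~\ref{thm:full} to padded clusters so that $\log n$ becomes a function of $\ddim(S)$, then patch the pieces together with a Lipschitz extension --- and the proof of item~\ref{it:lem-fi-expansion} indeed hinges on the extension's $O(\ddim(S))$ constant. But the specifics of your construction diverge from the paper's, and one divergence is a genuine gap.

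The paper does \emph{not} build a hierarchy of scales. It applies Theorem~\ref{thm:decomp} \emph{once}, at scale $s$, requesting padding probability $1-\veps/s$ rather than $1-\veps$. This costs $m = \tilde{O}(s\,\veps^{-1}\ddim(S))$ partitions, and that is precisely where the explicit factor of $s$ in the stated $k$ originates. The high padding probability is not a luxury: when a pair with $t\in[1,s]$ lands in different clusters of a partition, that coordinate block can contribute up to $\Theta(s^q)$ to $\|f(v)-f(w)\|_q^q$ after normalization by $m^{1/q}$, and for the $(1\pm\veps)$ guarantee of item~\ref{it:lem-fi-bilip2} this spurious contribution must be suppressed to an $O(\veps)$ additive term. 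With $1-\veps/s$ padding, the unpadded fraction is $\veps/s$, which buys exactly this. Your construction takes $O(\veps^{-1}\ddim\log\ddim)$ partitions per scale, i.e.\ constant-order padding failure, and then claims that ``amplifying via the multi-set $\mathcal{D}_i$ turns the constant-probability event into a deterministic bound over all pairs.'' That claim is incorrect: concatenating more partitions averages the good and bad coordinate blocks, it does not remove the bad ones. With a constant fraction of unpadded partitions you inherit a \emph{constant} additive error of order $s^q$, which swamps the signal $s^q H(t/s)$ near $t=1$. This is why your dimension count doesn't recover the factor of $s$ in $k$ --- your plan simply doesn't pay for it, and it is the price of correctness, not a bookkeeping nuisance to be ``absorbed.''

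Two further mismatches worth noting. First, the paper extracts an $\frac{\veps}{\ddim(S)}$-net of each cluster, scales up by $\ddim(S)/\veps$, and only then applies Theorem~\ref{thm:full}; this rescaling is the source of the parameter $s' = s\,\ddim(S)/\veps$ appearing in the stated $k$, and your plan omits it entirely. Second, for the extension to non-net points the paper invokes the Lee--Naor Lipschitz extension theorem \cite{LN05}, which supplies the $c\,\ddim(S)$ factor of item~\ref{it:lem-fi-expansion} as a black box; your replacement (nearest-$N_i$ assignment smoothed by a Whitney-type partition of unity) is not the same device, and you would need to separately prove it achieves the $O(\ddim(S))$ Lipschitz constant, which you do not do.
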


\begin{proof}
Similar to what was done in \cite{GK11}, we compute for $S$ a padded decomposition with padding $s$. 
This is a multiset $[P_1,\ldots,P_m]$ where each partition $P_i$ is a set of clusters, and 
every point is $s$-padded in a $\left( 1- \frac{\veps}{s} \right)$ fraction of the
partitions. Each cluster has diameter bounded by $O \left( s \ddim(S) \right)$, and the support
is $m = \tilde{O}(s \veps^{-1} \ddim(S))$. Using the hierarchy, this can be done in time 
$2^{\tilde{O}(\ddim(S))}$ per point \cite{BGKLR11}.

We embed each partition $P_i$ separately as follows:
For each cluster $C \in P_i$, we extract from $C$ an $\frac{\veps}{\ddim(S)}$-net $N \subset C$
Now each cluster net has aspect ratio $\frac{s \ddim^2 (S)}{\veps}$, and so 
$|N| = \left( \frac{s}{\veps} \right)^{\tilde O\ddim(S)}$. 
We then scale $N$ up by a factor of $\frac{\ddim(S)}{\veps}$ 
so that the minimum distance is 1,
invoke the embedding of Theorem \ref{thm:full} with parameter $s'$ on $N$, 
and scale back down. This procedure has a runtime cost of $O(mk)$ per point,
thresholds all distances at $s$, and reduces dimension to  
$\tilde{O} \left( \frac{\ddim(S)}{\veps^2} \cdot \min \left\{
s'^{2q}, \max \left\{ \frac{s'^{2q-p}}{2q-p},\veps s'^q \right\}
\right\} \right)$. We then concatenate the $m$ cluster partitions together
and scale down by $m^{1/q}$, achieveing an embedding of dimension $k$ for the net points.
We then extend this embedding to all points using the $c\ddim(S)$-factor
Lipschitz extension of Lee and Naor \cite{LN05} for metric space.

For the net points, item \ref{it:lem-fi-threshold} holds immediately. 
For item \ref{it:lem-fi-bilip2}, we note that when 
$1 \le t \le s$, the points fall in the same cluster in a fraction
$\left( 1- \frac{\veps}{s} \right)$ of the partitions, and these 
partitions contribute the correct amount to the interpoint distance.
However, in a fraction $\frac{\veps}{s}$ of the partition the points
are not found in the same cluster, and in these cases the contribution may
be as large as $s$. These partitions account for an additive value of at most 
$\frac{\veps}{s} \cdot s = \veps \le \veps t$. 
Item \ref{it:lem-fi-expansion} follows in an identical manner.

For the non-net points, item \ref{it:lem-fi-threshold} holds since
we may assume that all interpoint distance are at most
$s$, since the non-net points outside the convex hull of the net-points can all
be projected onto the hull, and this can only improve the quality of the extension.
Item \ref{it:lem-fi-bilip2}\ref{it:lem-fi-expansion} follow from 
the embedding of the $\frac{\veps}{\ddim(S)}$-net: By the guarantees of
the extension, an embedded non-net point may be at distance at most $O(\veps)$ from 
its closest net-point, and then the items follow by an appropriate scaling down of $\veps$.
\end{proof}

Similarly, the exact guarantees of Theorem \ref{thm:dim-reduction} can be achieved by a non-oblivious embedding with
dimension independent of $n$.

\begin{corollary}\label{cor:dim-red-intrinsic}
Let $1 \le q \le p \le 2$ and consider an $n$-point set $S \in l^m_p$. Set threshold $s>1$.
Fix range $R>1$ and set threshold $s$ as follows:
When $q<p$,
$s \approx R \veps^{-1/2} (1+(p-q)\veps^{-(q/2+1)})^{1/(p-q)},$
and when $q=p$,
$s \approx \max \{ R^{1/\veps}, R \veps^{-1/2} e^{\veps^{-(\frac{q}{2}+1)}} \}.$
Then there exists an embedding $f: S \to l_q^k$  satisfying items 
\ref{it:thm-dr-threshold}\ref{it:thm-dr-bilip}\ref{it:thm-dr-expansion}
of Theorem \ref{thm:dim-reduction}. The target dimension is
$k= \tilde{O} \left( \frac{\ddim^2(S)}{\veps^3} \cdot s \cdot \min \left\{
s'^{2q}, \max \left\{ \frac{s'^{2q-p}}{2q-p},\veps s'^q \right\}
\right\} \right)$ for $s' = \frac{s \ddim (S)}{\veps}$.
Given a point hierarchy for $S$, the embedding can be constructed in 
$2^{\tilde{O} \ddim(S)} + O(mk)$ time per point.
\end{corollary}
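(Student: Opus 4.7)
The plan is to mimic the proof of Theorem \ref{thm:dim-reduction} exactly, but with Lemma \ref{lem:fi} substituted for Theorem \ref{thm:full} as the underlying building block. Concretely, I would invoke Lemma \ref{lem:fi} with the same threshold parameter $s$ prescribed in the statement of Corollary \ref{cor:dim-red-intrinsic} (which is identical to the choice used in Theorem \ref{thm:dim-reduction}), and then scale the output down by a factor of $Q^{1/q}$ when $q<p$, or by $Q_{R/s}^{1/q}$ when $q=p$. Since $Q, Q_{R/s} = \Omega(1)$, this scaling is benign. The target dimension claimed in the corollary matches the bound provided by Lemma \ref{lem:fi} for this choice of $s$, and the construction time follows directly from that of Lemma \ref{lem:fi}.

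Next I would verify the three required items individually. Item \ref{it:thm-dr-threshold} (threshold) is an immediate consequence of Lemma \ref{lem:fi}\ref{it:lem-fi-threshold} together with the scaling step. Item \ref{it:thm-dr-bilip} (bi-Lipschitz for intermediate scales $1 \le t \le R$) follows by combining Lemma \ref{lem:fi}\ref{it:lem-fi-bilip2}, which gives $\|f(v)-f(w)\|_q^q \approx s^q H(t/s)$ in this range, with Lemma \ref{lem:H}\ref{it:lem-bilip}; the key point is that for the prescribed $s$, the ratio $t/s$ lies within the small-$a$ regime required by Lemma \ref{lem:H}\ref{it:lem-bilip}, which is precisely the calculation already carried out in the proof of Theorem \ref{thm:dim-reduction}. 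Item \ref{it:thm-dr-expansion} (bounded expansion for $t < 1$) follows from Lemma \ref{lem:fi}\ref{it:lem-fi-expansion}, combined with the bi-Lipschitz guarantee at $t=1$ just established.

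For the case $q=p$, I would make the additional observation, already used in Theorem \ref{thm:dim-reduction}, that $Q_{t/s}$ is monotone decreasing in $t$ and that for the chosen $s = O(R^{1/\veps})$ the ratio $Q_{t/s}/Q_{R/s}$ is $1 + O(\veps)$ across the range $1 \le t \le R$, so the same scaling argument goes through with at most an $O(\veps)$ multiplicative loss that can be absorbed.

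The proof is thus essentially bookkeeping: the hard work has already been done in Lemma \ref{lem:fi}, and the main obstacle is simply to verify that the parameter choices and scaling constants align correctly so that the dimension expression matches the one stated. I would also remark, to justify the omission of item \ref{it:thm-dr-contract} from the corollary, that Lemma \ref{lem:fi} thresholds all embedded distances at $s^q$, which is incompatible with the $O(t^q)$ upper bound for arbitrarily large $t > R$ demanded by that item; this is why the intrinsic-dimensionality version only preserves the small- and intermediate-scale guarantees.
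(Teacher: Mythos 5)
Your proposal is correct and is exactly the argument the paper intends: the corollary is stated as following ``similarly'' to Theorem \ref{thm:dim-reduction}, i.e.\ by rerunning that proof with Lemma \ref{lem:fi} (with the same choice of $s$ and the same $Q^{1/q}$ or $Q_{R/s}^{1/q}$ scaling) substituted for Theorem \ref{thm:full}, which immediately yields items \ref{it:thm-dr-threshold}, \ref{it:thm-dr-bilip}, \ref{it:thm-dr-expansion} together with the stated dimension and construction time. Your closing aside about why item \ref{it:thm-dr-contract} is dropped is a reasonable observation (the padded-decomposition construction gives no useful two-sided control beyond scale $s$), and in any case lies outside what the corollary asserts.
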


\paragraph{Comment.}
We conjecture that the $\frac{\ddim^2(S)}{\veps^3}$ terms can be reduced to $\frac{\ddim(S)}{\veps^2}$
by combining the randomness used separately for the construction of the padded decomposition,
threshold embedding $f$, and the Lipschitz extension (as was done in \cite{BRS11} for $\ell_2$).

\subsection{Snowflake embedding}\label{sec:snowflake}
The embedding of Lemma \ref{lem:fi}
implies a global dimension-reduction snowflake embedding for $\ell_p$.

\begin{lemma} \label{lem:snowflake}
Let $0<\veps<1/4$, $0<\alpha<1$, and
$\tilde{\alpha} = \min \{ \alpha,1-\alpha\}$.
Every finite subset $S\subset \ell_p$ admits an embedding $\Phi:S\to \ell_q^k$
($1 \le q \le p \le 2$)
for
$k= \tilde{O} \left( \frac{\ddim^6(S)}{\tilde{\alpha}^2 \veps^8} \cdot \min \left\{
s'^{2q}, \max \left\{ \frac{s'^{2q-p}}{2q-p},\veps s'^q \right\}
\right\} \right)$ where $s' = \left( \frac{ \ddim (S)}{\veps} \right)^4$
and $
   1
   \le \frac{\|\Phi(x)-\Phi(y)\|_q}{\|x-y\|_p^{\alpha}}
   \le 1+\veps,
   \qquad \forall x,y\in S.
$
Given a point hierarchy for $S$, the embedding can be constructed in 
$2^{\tilde{O} \ddim(S)} + O(mk)$ time per point.
\end{lemma}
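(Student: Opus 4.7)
The plan is to construct $\Phi$ by a multi-scale concatenation of the basic embedding from Lemma \ref{lem:fi}, with one coordinate block per dyadic scale and weights chosen to realize the $\alpha$-snowflake. Concretely, fix threshold parameter $s \approx (\ddim(S)/\veps)^{3}$. For each integer scale index $i$ in a range $[i_{\min},i_{\max}]$ covering the aspect ratio of $S$, let $f_i \colon S \to \ell_q^{k_0}$ be the embedding of Lemma \ref{lem:fi} applied to the rescaled metric $S/2^i$ with threshold $s$, and set $\Phi_i = 2^{i\alpha}\,f_i$. The final embedding $\Phi$ is the concatenation of the $\Phi_i$'s, suitably normalized. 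The target dimension is then $O(\veps^{-1}\tilde{\alpha}^{-1})$ scales (after truncation, see below) times the per-scale dimension from Lemma \ref{lem:fi}, matching the claimed bound since $k_0$ absorbs one factor of $\ddim^2/\veps^3$ and the multi-scale overhead absorbs another.

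For a pair $x,y$ with $t=\|x-y\|_p$, I would analyze $\|\Phi(x)-\Phi(y)\|_q^q$ by splitting the scales according to the critical level $i^* = \lfloor \log_2 t\rfloor$. The main contribution, at scales where $t/2^i \in [1,s]$, is bi-Lipschitz by Lemma \ref{lem:fi}\ref{it:lem-fi-bilip2}, contributing $2^{i\alpha q} s^q H(t/(s\cdot 2^i))$; using Lemma \ref{lem:H}\ref{it:lem-bilip2} this is $\Theta(2^{i\alpha q}\,(t/2^i)^q) = \Theta(t^q \cdot 2^{i(\alpha-1)q})$. Since $\alpha<1$, summing over the scales $i \le i^*$ yields a geometric series dominated by its top term $\approx t^{\alpha q}$; symmetrically, for large scales $i \gg i^*$ the embedding is nearly linear (small-scale regime of Lemma \ref{lem:fi}\ref{it:lem-fi-expansion}) so the contributions are $\approx 2^{i\alpha q} t^q$, and since $\alpha q>0$ but is ``beaten'' by the threshold cutoff $s^q$, the geometric tail sums to $\Theta(t^{\alpha q})$. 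For scales $i \ll i^*$ where the embedding is saturated at the threshold, the block contributes at most $2^{i\alpha q} s^q$, again a geometric series summing to a fraction of $t^{\alpha q}$. Combining these pieces, $\|\Phi(x)-\Phi(y)\|_q^q = \Theta(t^{\alpha q})$, and after a global rescaling we obtain $\|\Phi(x)-\Phi(y)\|_q = (1\pm O(\veps))\,t^{\alpha}$. Only $O(\veps^{-1}\tilde{\alpha}^{-1})$ scales around $i^*$ contribute non-negligibly to the sum, which justifies truncating the range of $i$ and hence the final dimension.

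The main obstacle is sharpening the $\Theta(\cdot)$ estimates into $(1+\veps)$ distortion: the geometric tails at neighboring scales must aggregate to a small additive perturbation of the central contribution, not merely a constant multiplicative one. This is where the smoothness property Lemma \ref{lem:H}\ref{it:lem-smooth} is essential, as it guarantees that the boundary scales blend smoothly enough that the sum across scales is continuous in $\log t$; a calibration of $s$ (large enough to make the bi-Lipschitz plateau wide relative to $\veps$) together with normalization of the $w_i = 2^{i\alpha}$ weights then absorbs the tails into a $1+\veps$ multiplicative error. Finally, the Lipschitz extension step is already built into Lemma \ref{lem:fi}, so $\Phi$ immediately extends from nets to all of $S$ at the stated cost, and the construction time follows by summing the per-scale cost of Lemma \ref{lem:fi} over the truncated range of scales.
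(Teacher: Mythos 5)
Your high-level intuition — that per pair $x,y$ only $O(\veps^{-1}\tilde\alpha^{-1})$ scales around the critical level $i^* \approx \log t$ contribute non-negligibly, while far scales contribute geometrically decaying tails — is correct and matches the quantitative heart of the paper's argument. However, the step from that observation to your dimension bound contains a genuine gap. You construct $\Phi$ as a \emph{concatenation} over scales and then claim the range of $i$ can be truncated. But truncation cannot be done globally: $i^*$ depends on the pair $x,y$, and an arbitrary finite $S\subset\ell_p$ has unbounded aspect ratio, so different pairs need disjoint windows of scales. A straight concatenation over all scales covering $\diam(S)/\min\text{-dist}$ therefore has dimension proportional to $\log\Delta$, not $O(\veps^{-1}\tilde\alpha^{-1})\cdot k_0$, and the claimed target dimension (which has no $\Delta$ dependence) is not achieved by your construction.

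The paper closes this gap with Assouad's round-robin summing device: the scaled single-scale maps $\varphi_i/(1+\veps)^{i(1-\alpha)}$ are grouped by residue class modulo $2v$ (with $v = O(\tfrac{1}{\tilde\alpha\veps}\log\tfrac{\ddim(S)}{\veps})$), and the maps in each group are \emph{summed into the same coordinate block} rather than concatenated, producing an embedding into $\ell_q^{2vk'}$ with dimension independent of $\Delta$. The price is that one must now control the cross-scale ``interference'' inside each block, which is exactly what Lemma~\ref{lem:buckets} (the two-sided triangle-inequality bound) and Lemma~\ref{lem:sumbounds} (geometric domination of the tails by the two central terms $B_i, B_{i+v}$) establish. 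Your proposal also leaves the $\Theta(t^{\alpha q}) \to (1\pm O(\veps))\,t^{\alpha q}$ sharpening as an acknowledged obstacle; in the paper this is precisely what the calibration of $s = (1+\veps)^{2v}$ together with Lemma~\ref{lem:sumbounds} delivers, since the window of size $2v$ makes the neighboring-scale tails contribute only an $O(\veps)$ fraction of the dominating terms, and the constant $M$ (the aggregate of the central contributions) can then be divided out. Without some variant of the round-robin summation you cannot both cover all critical scales and keep the dimension bounded.
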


To prove Lemma \ref{lem:snowflake}, we will make use of the snowflake framework presented in
\cite{GK11}. For simplicity, we will first prove the theorem for $\alpha = 1/2$, and then extend
the proof to arbitrary $0<\alpha<1$.
Fix a finite set $S\subset\ell_p$, and assume without loss of generality that the minimum interpoint distance
in $S$ is $1$. Define
$ v = 2 \left\lceil \log_{1+\veps} \left( \frac{\ddim(S)}{\veps} \right) \right\rceil
    = O \left( \tfrac1\veps \log \frac{\ddim(S)}{\veps} \right)$,
and the set
$I=\{i\in\mathbb Z:\ (1+\veps)^{-2v} \le (1+\veps)^i \le (1+\veps)^{2v} \diam(S) \}$.
For each $i\in I$, fix $r_i=(1+\veps)^i$, and let $\varphi_i:S\to\ell_p^{k'}$ 
for $k' = O(k\veps)$ be the embedding that
achieves the bounds of Lemma \ref{lem:fi}
with parameter $s = (1+\veps)^{2v} = \left( \frac{\ddim (S)}{\veps} \right)^4$ and
scaled by $\frac{r_i}{\sqrt{s}}$ so that the bi-Lipschitz property Lemma 
\ref{lem:fi}\ref{it:lem-fi-bilip2} holds whenever
$\frac{r_i}{\sqrt{s}} \le t \le \sqrt{s} r_i$ and the threshold guarantee of
Lemma \ref{lem:fi}\ref{it:lem-fi-threshold} holds at $\sqrt{s}r$.

We shall now use the following technique due to Assouad \cite{A83}.
First, each $\varphi_i$ is scaled down by $1/\sqrt{r_i} = (1+\veps)^{-i/2}$.
They are then grouped in a round-robin fashion into $2v$ groups,
and the embeddings in each group are summed up.
This yields $2v$ embeddings, each into $\ell_p^{k'}$; these
are combined using a direct-sum, resulting in one map $\Phi$ into $\ell_p^{2vk'}$.
Formally, let $i\equiv_{v} j$ denote that two integers $i,j$ are equal
modulo $v$.
Define $\Phi:S\to \ell_p^{2vk'}$ using
the direct sum $\Phi=\bigoplus_{j\in[2v]} \Phi_j$,
where each $\Phi_j:S\to\ell_p^{k'}$ is given by
$$
  \Phi_j=\sum_{i\in I:\ i\equiv_{2v} j} \frac{\varphi_i}{(1+\veps)^{i/2}}.
$$
For $M=M(\veps)>0$ that will be defined later,
our final embedding is $\Phi/\sqrt M:S\to \ell_p^{2vk'}$, which has
target dimension $2vk' = O(k)$, as required (for $\alpha = 1/2$).
It thus remains to prove the distortion bound.
Define $B_i = \frac{\|\varphi_i(x)-\varphi_i(y)\|_p}{(1+\veps)^{i/2}}$.
We will need the following lemma.

\begin{lemma} \label{lem:buckets}
Let $\Phi:S\to\ell_p^{2vk'}$ be as above, let $x,y\in S$.
Then for every interval $A\subset I$ of size $2v$
(namely $A=\{a-v,\ldots,a,\ldots,a+v-1\}$),
\begin{align*}
  \|\Phi(x)-\Phi(y)\|^p_p
  &\le \sum_{i\in A}
        \Big(B_i
        \  + \sum_{i'\in I\setminus A:\ i'\equiv_{2v} i}
        B_{i'}
        \Big)^p \\
  &\le \sum_{i \in A}
        \Big(B_i^p +
        \ 2B_i
        \sum_{i'\in I\setminus A:\ i'\equiv_{2v} i} B_{i'} +
        \Big(
        \sum_{i'\in I\setminus A:\ i'\equiv_{2v} i} B_{i'}
        \Big)^p \Big) \\
  &\le \sum_{a-v \le i < a}
        \Big(B_i^p + B_{i+v}^p +
        \ 2(B_i + B_{i+v})
        \sum_{i'\in I\setminus A:\ i'\equiv_v i} B_{i'} +
        \Big(
        \sum_{i'\in I\setminus A:\ i'\equiv_v i} B_{i'}
        \Big)^p \Big) \\
  \|\Phi(x)-\Phi(y)\|^p_p
  &\ge \sum_{i\in A}
        \Big( \max \Big\{0,
        B_i
        \  - \sum_{i'\in I\setminus A:\ i'\equiv_{2v} i}
        B_{i'}
        \Big\} \Big)^p \\
  &\ge \sum_{a-v \le i < a}
        \Big( B_i^p + B_{i+v}^p -
        \ 2(B_i + B_{i+v})
        \sum_{i'\in I\setminus A:\ i'\equiv_v i}
        B_{i'}
        \Big)
\end{align*}
\end{lemma}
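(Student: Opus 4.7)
My proof uses the direct-sum structure of $\Phi$, applies the triangle inequality within each block $\Phi_j$, expands the resulting $p$-th powers by an elementary estimate, and consolidates via the pairing $i\leftrightarrow i+v$. The argument is essentially bookkeeping once the correct groupings are identified.

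\emph{Block decomposition.} From $\Phi=\bigoplus_{j\in[2v]}\Phi_j$ I get $\|\Phi(x)-\Phi(y)\|_p^p=\sum_{j\in[2v]}\|\Phi_j(x)-\Phi_j(y)\|_p^p$, and since each residue class modulo $2v$ is hit by a unique $i\in A$, I re-index this as a sum over $i\in A$. The triangle inequality applied to $\Phi_j=\sum_{i'\equiv_{2v}j}\varphi_{i'}/(1+\veps)^{i'/2}$ separates the in-$A$ term $B_i$ from the out-of-$A$ tail $C_i^{(2v)}:=\sum_{i'\in I\setminus A,\,i'\equiv_{2v}i}B_{i'}$, and its reverse form yields $\|\Phi_j(x)-\Phi_j(y)\|_p\ge\max\{0,B_i-C_i^{(2v)}\}$. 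Raising to the $p$-th power and summing over $i\in A$ produces the first inequality in each of the upper and lower chains of the lemma.

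\emph{Power expansion and pairing.} For the second upper bound I expand $(B_i+C_i^{(2v)})^p$ via the elementary inequality $(a+b)^p\le a^p+b^p+2ab$, valid for $1\le p\le 2$ in the relevant scale regime; the one-sided form $(\max\{0,a-b\})^p\ge a^p-2ab$ handles the lower side after discarding the nonnegative $b^p$ term. I then re-index the sum over $A=\{a-v,\ldots,a+v-1\}$ by pairing each $i\in[a-v,a)$ with $i+v\in[a,a+v)$. Since the residue class of $i$ modulo $v$ is exactly the union of the residue classes of $i$ and $i+v$ modulo $2v$, the tails satisfy $C_i^{(v)}=C_i^{(2v)}+C_{i+v}^{(2v)}$, where $C_i^{(v)}:=\sum_{i'\in I\setminus A,\,i'\equiv_{v}i}B_{i'}$. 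Superadditivity $a^p+b^p\le(a+b)^p$ for $p\ge 1$ then collapses the two pure-tail terms into the single $(C_i^{(v)})^p$, while the trivial monotonicity $C_i^{(2v)},\,C_{i+v}^{(2v)}\le C_i^{(v)}$ upgrades the cross terms to $2(B_i+B_{i+v})C_i^{(v)}$ on the upper side and relaxes them to $-2(B_i+B_{i+v})C_i^{(v)}$ on the lower side, yielding the claimed forms.

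\emph{Main obstacle.} The one delicate step is the elementary inequality $(a+b)^p\le a^p+b^p+2ab$, which fails for arbitrarily small $a,b$ when $p<2$. In our setting it is applicable because the Lemma~\ref{lem:fi} scaling ensures that in the bi-Lipschitz window the in-block term $B_i$ is at least of constant magnitude; alternatively, one can replace $2ab$ by the universally valid $p\cdot 2^{p-1}\max(a,b)^{p-1}\min(a,b)$ and absorb the constant into the $O(\cdot)$ in the subsequent snowflake distortion analysis. Everything else is routine index manipulation and standard norm inequalities.
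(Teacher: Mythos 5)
Your proof takes essentially the same route as the paper's: block decomposition via the direct sum, the forward and reverse triangle inequality to separate the $i'=i$ term from the out-of-$A$ tail, power expansion, and pairing $i\leftrightarrow i+v$ via the observation that the residue class of $i$ mod $v$ is the union of the classes of $i$ and $i+v$ mod $2v$. Your extra care with the elementary power inequality is, in fact, a point where the paper itself is slightly cavalier: the paper simply asserts ``$(\max\{0,b-c\})^p \ge b^p - 2bc$'' (and implicitly the dual $(a+b)^p\le a^p+b^p+2ab$) without qualification, yet both fail for small $b$ when $1\le p<2$ (e.g.\ $p=1.5$, $b=0.1$, $c=0.05$). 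Your proposed fix — replacing the cross term $2ab$ with the universally valid $p\,2^{p-1}\max(a,b)^{p-1}\min(a,b)$, which follows from the mean value theorem applied to $t\mapsto (a+t)^p$, and then absorbing the constant downstream — is a sound repair that the paper's exposition would benefit from; alternatively one would need to argue explicitly that the magnitudes of $B_i$ supplied by Lemma \ref{lem:sumbounds} place the argument in a regime where the stated inequality does hold. Either way, the structure of the argument and the end result match the paper's.
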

\begin{proof}
By construction,
\[
  \Big\|\Phi(x)-\Phi(y)\Big\|^p_p
  = \sum_{j\in[v]} \Big\|\Phi_j(x)-\Phi_j(y)\Big\|^p_p
  = \sum_{i\in A} \Big\| \sum_{i'\in I:\ i'\equiv_{2v} i}
            \frac{\varphi_{i'}(x)-\varphi_{i'}(y)}{(1+\veps)^{i/2}}
  \Big\|^p_p.
\]
Fix $i\in A$ and let us bound the term corresponding to $i$.
The first required inequality now follows by separating
(among all $i'\in I$ with $i'\equiv_{2v} i$)
the term for $i'=i$ from the rest,
and applying the triangle inequality for
vectors $v_1,\ldots,v_s\in \ell_p^{k'}$, namely,
$
  \|\sum_l v_l\|_p \leq \sum_l \|v_l\|_p
$.
We then bound the sum of the terms for indices $i$ and $i+v$.
The second inequality follows similarly by separating the term for $i'=i$
from the rest, and applying the following triangle inequality for
vectors $u,v_1,\ldots,v_s\in \ell_p^k$, namely,
$
  \|u+\sum_l v_l\|_p \ge \max\{0, \|u\|_p - \sum_l \|v_l\|_p \}
$.
We then bound the sum of the terms for indices $i$ and $i+v$. (Note
that $(\max\{0,b-c\})^p \ge b^p - 2bc$.)
\end{proof}

The proof of Lemma \ref{lem:snowflake} proceeds by demonstrating that, for an
appropriate choice of $A$ (meaning $a$, having fixed $v$), the leading terms in
the above summations ($B_i^p$ and $B_{i+v}^p$ for $a \le i < a+v$) dominate
the sum of all other terms of the summations. Fix $x,y\in S$, let $i^*\in
I$ be such that $(1+\veps)^{i^*} \le \|x-y\|_p \le (1+\veps)^{i^*+1}$, and
let $a=i^*$ so that $A=\{i^*-v,\ldots,i^*+v-1\}$. Consider $i\in A$; we have
the following lemma:

\begin{lemma}\label{lem:sumbounds}
The following hold for all $i^*-v \le i < i^*$:
\begin{enumerate}
\renewcommand{\theenumi}{(\alph{enumi})}
\item\label{it:lowterms}
$\veps \cdot B_i
= \Omega \left( \sum_{i'\in I\setminus A:\ i'\equiv_v i , i'<i} B_{i'} \right) $
\item\label{it:highterms}
$\veps \cdot B_{i+v}
= \Omega \left( \sum_{i'\in I\setminus A:\ i'\equiv_v i , i'>i} B_{i'} \right)$
\item\label{it:allterms}
$\veps \cdot (B_i + B_{i+v})
= \Omega \left( \sum_{i'\in I\setminus A:\ i'\equiv_v i} B_{i'} \right)$
\end{enumerate}
\end{lemma}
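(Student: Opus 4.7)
The plan is to separate the indices $i'$ in each sum according to whether they lie below or above the interval $A$, then apply the appropriate clause of Lemma~\ref{lem:fi} to bound $B_{i'}$ in each regime and sum the resulting geometric series. The key arithmetic fact is that as $i'$ shifts by $\pm v$, the scale $r_{i'}=(1+\veps)^{i'}$ changes by a factor $(1+\veps)^{\pm v}$, while the choice $v=2\lceil\log_{1+\veps}(\ddim(S)/\veps)\rceil$ makes $(1+\veps)^{v/2}\ge \ddim(S)/\veps$; this single quantity will provide the required $\veps$-gap between the head of each geometric tail and the leading $B_i$ or $B_{i+v}$.

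For part (a), any $i'\equiv_v i$ with $i'<i$ and $i'\notin A$ satisfies $i'\le i-v<i^*-v$, so $\sqrt{s}\,r_{i'}<(1+\veps)^{i^*}\le t$, placing $t$ above the threshold for $\varphi_{i'}$. The threshold clause Lemma~\ref{lem:fi}\ref{it:lem-fi-threshold}, combined with the scaling $\lambda=r_{i'}/\sqrt{s}$, gives $\|\varphi_{i'}(x)-\varphi_{i'}(y)\|_q\le \sqrt{s}\,r_{i'}$, hence $B_{i'}\le \sqrt{s\,r_{i'}}$; since $r_{i'}$ contracts by $(1+\veps)^{-v}$ per $v$-step, the terms $B_{i-v},B_{i-2v},\ldots$ form a geometric sequence with ratio $(1+\veps)^{-v/2}$ and sum $O(B_{i-v})$. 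Comparing with $B_i$ via the bi-Lipschitz clause Lemma~\ref{lem:fi}\ref{it:lem-fi-bilip2} (applicable since $i\in A$), the choice of $v$ forces $B_{i-v}=O(\veps B_i)$. For part (b), every relevant $i'$ satisfies $i'\ge i+2v\ge i^*+v$, so $r_{i'}/\sqrt{s}\gg t$ and $t$ sits far below the bi-Lipschitz window for $\varphi_{i'}$; I invoke the strong bounded-expansion clause Lemma~\ref{lem:fi}\ref{it:lem-fi-expansion}, taking the Lipschitz branch $c\,\ddim(S)\,t$, which after scaling becomes $\|\varphi_{i'}\|_q^q\le c\,\ddim(S)\,\lambda^{q-1}t$, and division by $r_{i'}^{q/2}$ exhibits geometric decay of $B_{i'}$ in $i'$ (at least for $q<p$). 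The tail is $O(B_{i+2v})$; since $i+v\in A$ gives $B_{i+v}=\Theta(t/r_{i+v}^{1/2})$ by the bi-Lipschitz clause, one $v$-step separation again yields a factor $(1+\veps)^{-v/2}\le \veps/\ddim(S)\le\veps$.

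Part (c) follows immediately: the set $\{i'\equiv_v i,\ i'\notin A\}$ is the disjoint union of the tails treated in (a) and (b), and $B_i+B_{i+v}$ dominates both pivots used there, so adding the bounds of (a) and (b) yields (c). The \emph{main obstacle} is the right endpoint $i=i^*-1$, where $B_i$ inside $A$ is smallest while the leading term $B_{i^*-1-v}$ of the sum in (a) corresponds to a scale at which $t$ exceeds the threshold $\sqrt{s}\,r_{i'}$ only by a factor of $(1+\veps)$, so the threshold bound is nearly tight; verifying the precise $\veps$-gap here is exactly where the choice of $v$ and the shape of the bi-Lipschitz constant $Q$ from Lemma~\ref{lem:H}\ref{it:lem-bilip2} must be tracked carefully, with the subcase $q=p$ requiring additional bookkeeping of the slowly varying $Q_a$.
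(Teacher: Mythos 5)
Your high-level plan matches the paper's: for part (a), cap $B_{i'}$ by the threshold clause, sum the geometric series, and compare to a bi-Lipschitz lower bound on $B_i$; for part (b), use the strong-expansion clause and compare to $B_{i+v}$; for (c), add. However, the arithmetic in part (a) does not close. The raw threshold of Lemma~\ref{lem:fi}\ref{it:lem-fi-threshold} gives $\|\varphi_{i'}\|_q < \sqrt{s}\,r_{i'}$, hence $B_{i'} \le \sqrt{s}\,r_{i'}^{1/2}$, exactly as you write. Taking the leading term $i'=i-v$ and using $B_i = \Omega(r_i^{1/2})$ from the bi-Lipschitz clause yields
\[
\frac{B_{i-v}}{B_i} \;\lesssim\; \frac{\sqrt{s}\,r_{i-v}^{1/2}}{r_i^{1/2}} \;=\; \sqrt{s}\,(1+\veps)^{-v/2} \;=\; (1+\veps)^{v/2} \;\approx\; \frac{\ddim(S)}{\veps},
\]
since $\sqrt{s}=(1+\veps)^v$. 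This is $\Theta(\ddim(S)/\veps)$, not $O(\veps)$ — a gap of order $(\ddim(S)/\veps)^2$. The assertion ``the choice of $v$ forces $B_{i-v}=O(\veps B_i)$'' is therefore not supported by the bounds you actually derive; you correctly flag the endpoint $i=i^*-1$ as the danger zone, but you do not resolve it, and the numbers as stated do not resolve it either. The paper's proof of \ref{it:lowterms} instead invokes the much sharper estimate $\|\varphi_j(x)-\varphi_j(y)\|=O(\log(\ddim(S)/\veps)\cdot r_j)$ for indices below $A$, in place of the crude threshold $\sqrt{s}\,r_j=(\ddim(S)/\veps)^2 r_j$, and it is precisely that extra factor of $\sqrt{s}/\log(\ddim(S)/\veps)$ that makes the geometric tail fall below $\veps B_i$. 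To make your argument go through you would need to produce (or cite) this refined bound on the thresholded $\varphi_j$; the bare clause \ref{it:lem-fi-threshold} is insufficient.

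Two smaller points. First, in part (b) you carry a scaling factor $\lambda^{q-1}$ through the Lipschitz branch $c\,\ddim(S)\,t$ of Lemma~\ref{lem:fi}\ref{it:lem-fi-expansion}. The paper treats this bound as a genuine Lipschitz estimate $\|\varphi_j(x)-\varphi_j(y)\|_q = O(\ddim(S)\,\|x-y\|_p)$, which is scale-invariant, so no $\lambda^{q-1}$ appears; with that reading your computation of the tail is $O(\ddim(S)\,(1+\veps)^{-v/2})=O(\veps)$ and matches the paper. Second, the bi-Lipschitz lower bound on $B_i$ that you invoke hides a $Q^{1/q}$ dependence (from Lemma~\ref{lem:H}\ref{it:lem-bilip2}) which you mention at the end but do not actually use; note that even granting the most favorable value of that constant, the $(1+\veps)^{v/2}$ gap in part (a) persists.
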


\begin{proof}
We prove each item in turn.
\begin{enumerate}
\item[\ref{it:lowterms}]
By Lemma \ref{lem:fi}\ref{it:lem-fi-bilip2} and Lemma \ref{lem:H}\ref{it:lem-bilip2},
we have that $\|\varphi_i(x)-\varphi_i(y)\|_p = \Omega(r_i)$, from which it follows that
$B_i = \Omega  (1+\veps)^{j/2}$.
Similarly we have that for all $j < i$, 
$\|\varphi_j(x)-\varphi_j(y)\|_p = O(\log(\ddim(S)/\veps) \cdot r_j)$, 
from which it follows that $B_j = O( (1+\veps)^{j/2} )$.
Recalling that a geometric series with constant ratio sum to some constant times the largest
largest term, we have that
\[
 \sum_{i'\in I\setminus A:\ i'\equiv_v i , i'<i} B_{i'}
 = O \left( \log(\ddim(S)/\veps) (1+\veps)^{(i-v)/2} \right)
 = O \left( \veps (1+\veps)^{i/2} \right)
 = O \left( \veps B_i \right).
\]

\item[\ref{it:highterms}]
By Lemma \ref{lem:fi}\ref{it:lem-fi-bilip2} and Lemma \ref{lem:H}\ref{it:lem-bilip2},
we have that $\|\varphi_{i+v}(x)-\varphi_{i+v}(y)\|_p = \Omega(\|x-y \|_p) = \Omega((1+\veps)^{i^*})$, so
$B_{i+v} = \Omega \left( (1+\veps)^{i^* - (i+v)/2} \right)$.
By Lemma \ref{lem:fi}\ref{it:lem-fi-expansion}, we have for $j>i$ that
$\|\varphi_{j}(x)-\varphi_{j}(y)\|_p
= O \left( \ddim(S) ( \|x-y \|_p) \right)
= O( \ddim(S) (1+\veps)^{i^*})$. 
It follows that for $i'\equiv_v i$ and $i'>i$ we have 
$B_{i'} = O \left( \ddim(S) (1+\veps)^{i^{*} - i'/2} \right)$, and thus
\[
 \sum_{i'\in I\setminus A:\ i'\equiv_v i , i'>i} B_{i'}
 = O \left( \ddim(S) (1+\veps)^{i^* -(i+2v)/2} \right)
 = O \left( \veps B_{i+v} \right).
\]

\item[\ref{it:allterms}] This follows trivially from parts \ref{it:lowterms} and
\ref{it:highterms}.
\end{enumerate}
\end{proof}

Now, plugging Lemma \ref{lem:sumbounds} into Lemma \ref{lem:buckets}, we obtain
\begin{align*}
  \|\Phi(x)-\Phi(y)\|_p^p
  & \ge \sum_{i\in A:\ i < i^*}
        \Big( B_i^p + B_{i+v}^p - p(B_i+B_{i+v})
        \sum_{i'\in I \setminus A:\ i'\equiv_v i} B_{i'}
        \Big) \\
  &\ge  (1- O(\veps))
        \sum_{i\in A:\ i < i^*}
        \Big( B_i^p + B_{i+v}^p \Big)        \\
  & =   (1- O(\veps))
        \sum_{i\in A} B_i^p    \\
  &=    (1- O(\veps))
	\sum_{i\in A}
	\frac{\|\varphi_i(x)-\varphi_i(y)\|_p}{(1+\veps)^{pi/2}}	\\
  &=	(1- O(\veps)) M \frac{\|x-y\|_p^p}{(1+\veps)^{p{i^*}/2}},
\end{align*}
for some fixed constant $M$, where the last step follows from
Theorem \ref{thm:full} and Lemma \ref{lem:H}\ref{it:lem-smooth}\ref{it:lem-bilip2}.
Similarly, $\|\Phi(x)-\Phi(y)\|_p^p \le (1- O(\veps)) M \frac{\|x-y\|_p^p}{(1+\veps)^{pi/2}}$.
We conclude that the final embedding $\Phi/M^{1/p}$ achieves distortion $1+O(\veps)$
for $\alpha = 1/2$.

\paragraph{Arbitrary $\mathbf{0<\alpha<1}$.}
Turning to proving the theorem for arbitrary values of $0<\alpha<1$,
we repeat the previous construction and
proof with
$v
=2 \frac{\ceil{\log_{1+\veps}(\frac{\ddim(S)}{\veps})}}{\tilde{\alpha}}
=O(\frac{1}{\tilde \alpha \veps} \log \frac{\ddim(S)}{\veps})
$.
As before, $\varphi_i:S\to\ell_2^{k'}$ is the embedding that achieves the
bounds of Theorem \ref{thm:full},
and $\Phi:S\to \ell_p^{vk'}$ is defined by
the direct sum $\Phi=\bigoplus_{j\in[v]} \Phi_j$,
where each $\Phi_j:S\to\ell_p^k$ is given by
$$
  \Phi_j=\sum_{i\in I:\ i\equiv_v j}
\frac{\varphi_i}{(1+\veps)^{i(1-\alpha)}}.
$$
The final embedding is $\Phi/M^{1/p}:S\to \ell_2^{2vk'}$ (for the same $M$
as above), which has the required target dimension.

We need to make only small changes to the preceding proof of distortion:
In the statement and proof of Lemma \ref{lem:buckets} and elsewhere, the
dividing term $(1+\veps)^{i/2}$ is replaced by $(1+\veps)^{i(1-\alpha)}$.
Note that the increase in value of $v$ (and the introduction of the term
$\tilde{\alpha}$), is necessary for Lemma \ref{lem:sumbounds} to hold in this
setting. (In particular, the bounds on the geometric series in the proof of
Lemma \ref{lem:buckets} \ref{it:lowterms} require, for the above choice of $v$,
that $\tilde \alpha \le \alpha$, and the bounds
on the geometric series in the proof of \ref{it:highterms} require
$\tilde \alpha \le 1-\alpha$.) No other changes to the proof are necessary, and this
completes the proof of Lemma \ref{lem:snowflake}.

\paragraph{Comments.}
For $l_1$, the snowflake dimension can be further improved by plugging in the dimension reduction embedding of
\cite{OR02}. For $\alpha \le \frac{p}{2}$, since the $\alpha$-snowflake of $\ell_p$ is in $l_2$ \cite{DL97},
one could instead embed the snowflake into $l_2$, reduce dimension via \cite{JL84} or via a second snowflake from $l_2$ to
$l_2$ \cite{BRS11,GK11}, and embed back into $\ell_p$ using \cite{JS82}.
In regards to Lemma \ref{lem:fi} above, we conjecture that the target dimension in Lemma \ref{lem:snowflake} can be reduced to
$k=\tilde O(\veps^{-3}\tilde{\alpha}^{-2}(\dimC S))$ by combining randomness.

\section{Clustering}\label{sec:cluster}
Here we show that our snowflake embedding can be used to 
produce faster algorithms for the $k$-center and min-sum clustering problems.
In both cases, we obtain improvements whenever
$(\ddim/\veps)^{\Theta(1)}$ is smaller than the ambient dimension.

\paragraph{$k$-center clustering.}
In the $k$-center clustering problem, the goal is to partition the input set into $k$ clusters,
where the objective function to be minimized is the maximum radius among the clusters.
Agarwal and Procopiuc \cite{AP02} considered this problem for 
$d$-dimensional set $S \subset \ell_p$, and gave an exact algorithm that runs in time
$n^{O(k^{1-1/d})}$,
and used this to derive a $(1+\veps)$-approximation algorithm that runs in 
$O(nd\log k) + (k\veps^{-d})^{O(k^{1-1/d})}$. 
Here, the cluster centers are points of the ambient space $\R^d$ in which $S$ resides, 
chosen to minimize the maximum distance from points in $S$ to their nearest center.
The authors claim that the algorithm can be applied to the {\em discrete} problem as well, where all
centers are chosen from $S$, and in fact the algorithm applies to the more
general problem where the centers are chosen from a set $S'$ satisfying $S \subset S' \subset \R^d$.%
\footnote{The Euclidean core-set algorithm of \cite{BHI02} runs in time $k^{O(k/\veps^2)} \cdot nd$, and 
can readily be seen to apply to all $\ell_p$ for constant $p$, $1<p \le 2$ 
(see \cite{Pa04} for a simple approach). The algorithm of \cite{AP02} compares favorably to the core-set 
algorithm when $d$ is small.}
Clearly, the runtime of the algorithm can be improved if the dimension is lowered.

\begin{theorem}
Given $d$-dimensional point set $S \subset \ell_p$ for constant $p$, $1<p \le 2$, a 
$(1+\veps)$-approximate solution to the $k$-center problem on set $S$ can be computed in time
$O(nd (2^{\tilde{O}(\ddim(S))} + \log k)) 
+ (k \cdot \veps^{-\ddim(S)\log(1/\veps)/\veps^2})^{k^{1-(\veps/\ddim(S))^{O(1)}}}$.
This holds for the discrete case as well.
\end{theorem}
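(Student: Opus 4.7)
The plan is to reduce the ambient dimension from $d$ to a function of $\ddim(S)$ by applying our range embedding, and then invoke the Agarwal--Procopiuc algorithm \cite{AP02} in the lower-dimensional space. Because AP's runtime depends on the dimension through an exponent that saturates near $1$ in the $k$-factor, any reduction from $d$ to $d'=\poly(\ddim(S)/\veps)$ directly transfers to the bound claimed in the theorem.

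First, I would compute a point hierarchy for $S$ in time $2^{\tilde O(\ddim(S))}\cdot n$ and obtain a constant-factor approximation $r$ to the optimal radius $r^*$ (for instance, a doubling-aware variant of Gonzalez's greedy algorithm run over the hierarchy) in time $O(nd\log k)$; this yields $r^*\le r\le O(r^*)$. All distances relevant to any $(1+\veps)$-approximate solution lie in the range $[\Theta(\veps r),O(r)]$: distances exceeding $O(r)$ cannot be intra-cluster, and distances below $\veps r$ contribute at most an additive $\veps r^*$ to the cost, which is absorbable in the $\veps$-slack. I would then invoke the intrinsic-dimension range embedding of Corollary \ref{cor:dim-red-intrinsic} with $q=p$, accuracy $\veps$, and range $R=\Theta(1/\veps)$ after rescaling by $r$, producing a map $f:S\to \ell_p^{d'}$ with $d'=\tilde O(\ddim(S)\log(1/\veps)/\veps^2)$ (reading off the bound in Corollary \ref{cor:dim-red-intrinsic} for $q=p$ and $R=O(1/\veps)$). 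The construction cost is $2^{\tilde O(\ddim(S))}+O(d\cdot d')$ per point, which aggregates to $O(nd(2^{\tilde O(\ddim(S))}+\log k))$ together with the preprocessing.

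Second, I would run Agarwal--Procopiuc on $f(S)\subset\ell_p^{d'}$ with accuracy $\veps$; their stated bound $O(n d'\log k) + (k\veps^{-d'})^{O(k^{1-1/d'})}$ gives, after substituting the value of $d'$, the term $(k\cdot\veps^{-\ddim(S)\log(1/\veps)/\veps^2})^{k^{1-(\veps/\ddim(S))^{O(1)}}}$ from the theorem. The returned clustering partitions $f(S)$, hence also $S$, into $k$ groups, and by Corollary \ref{cor:dim-red-intrinsic} the radius evaluated in the embedded space is within $(1+O(\veps))$ of the radius of the same partition in the original space; rescaling $\veps$ by a constant yields a $(1+\veps)$-approximation.

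The main obstacle is handling the continuous variant of $k$-center, in which AP chooses centers from all of $\R^{d'}$ rather than from $f(S)$: a continuous center in $\R^{d'}$ need not have a clean preimage in $\R^d$. I would resolve this by augmenting the candidate center set before embedding: for each point of $S$ lay down an $\veps r$-net of its ball of radius $r$ (of size $(2/\veps)^{\ddim(S)}$ by the doubling packing bound), letting $S'$ be the union, and feed $f(S\cup S')$ to the discrete-variant of AP (which is valid for any $S\subseteq S'\subseteq\R^{d'}$). Since some point of $S'$ is within $\veps r\le \veps r^*$ of each optimal continuous center, a discrete $(1+\veps)$-approximation on $f(S\cup S')$ is a $(1+O(\veps))$-approximation for the continuous problem; the factor $(2/\veps)^{\ddim(S)}$ only multiplies the $n$ in AP's runtime (an additive $\ddim(S)\log(1/\veps)$ contribution inside the polynomial factor) and is absorbed into the second term of the stated bound, yielding the theorem in both the continuous and discrete cases.
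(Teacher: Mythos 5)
Your overall plan (reduce the dimension to $\poly(\ddim/\veps)$, then run Agarwal--Procopiuc) matches the paper's, but the specific embedding you invoke does not deliver the dimension you claim, and this breaks the stated runtime. You apply the range embedding of Corollary \ref{cor:dim-red-intrinsic} with $q=p$ (necessary, since you want to stay in $\ell_p$ for AP) and read off a target dimension $d'=\tilde O(\ddim(S)\log(1/\veps)/\veps^2)$. But for $q=p$ that corollary requires the threshold $s\approx\max\{R^{1/\veps},\,R\veps^{-1/2}e^{\veps^{-(q/2+1)}}\}$, and its target dimension carries a factor of roughly $s\cdot s'^{\,p}$ with $s'=s\ddim(S)/\veps$; even for $R=\Theta(1/\veps)$ this is exponential in $\poly(1/\veps)$, not $(\ddim(S)/\veps)^{O(1)}$. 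Consequently the exponent $k^{1-1/d'}$ you get from AP is not $k^{1-(\veps/\ddim(S))^{O(1)}}$, and the base $\veps^{-d'}$ is far larger than $\veps^{-\ddim(S)\log(1/\veps)/\veps^2}$. This is exactly why the paper does not use the range embedding here: it first extracts an $\frac{\veps}{2}\tilde r$-net $V$ (of size $k\veps^{-O(\ddim(S))}$, since $S$ lies in $k$ balls of radius $\approx r^*$), then applies the \emph{snowflake} embedding of Lemma \ref{lem:snowflake}, whose dimension is $(\ddim(S)/\veps)^{O(1)}$ even for $q=p$, and runs the \emph{exact} AP algorithm on the embedded net; since a snowflake preserves the ordering of distances, the optimal clustering in the embedded space is a valid $(1+O(\veps))$-approximate clustering in the original space.

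The second gap is in your treatment of the continuous (non-discrete) case. You propose candidate centers given by an $\veps r$-net of the ball $B(x,r)\subset\R^d$ around each input point, and bound its size by $(2/\veps)^{\ddim(S)}$. The packing bound applies to subsets of the doubling set $S$; a full ball of the ambient space $\ell_p^d$ has doubling dimension $\Theta(d)$, so your candidate set has size $n\cdot(1/\veps)^{\Theta(d)}$ -- exponential in the ambient dimension and proportional to $n$, which destroys both terms of the claimed bound (the second term must be independent of $n$ and of $d$). The paper circumvents this using the minimum-enclosing-ball core-set result: every candidate center can be taken ($(1+\veps)$-approximately) as the center of the enclosing ball of some $O(\veps^{-2})$-point subset of the net $V$, so the candidate set $W$ has size $k\veps^{-O(\ddim(S)/\veps^2)}$ and doubling dimension $O(\ddim(S)\log(1/\veps)/\veps^2)$; embedding $V\cup W$ with the snowflake and running AP with centers restricted to $W$ is what produces the $\veps^{-\ddim(S)\log(1/\veps)/\veps^2}$ base in the theorem. (Your soundness argument that only distances in $[\Theta(\veps r),O(r)]$ matter for $k$-center is fine as far as it goes, but it cannot rescue the quantitative bounds above.)
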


\begin{proof}
We first consider the discrete case. Let $r^*$ be the optimal radius.
As in \cite{AP02}, we run the algorithm of Feder and Greene \cite{FG88} in 
time $O(n\log k)$ to obtain a value $\tilde{r}$ satisfying
$r^* \le \tilde{r} < 2r^*$. 
We then build a hierarchy and extract a $\frac{\veps}{2} \tilde{r}$-net $V \subset S$ in time 
$2^{\tilde{O}(\ddim(S))}nd$ (assuming the word-RAM model \cite{BG13}). 
Since all points of $S$ are contained in $k$ balls of radius $r$, we have
$|V| = k\veps^{-O(\ddim(S))}$. Further, a $k$-clustering for $V$ is a
$(1+O(\veps))$-approximate $k$-clustering for $S$. 
We then apply the snowflake embedding of Lemma \ref{lem:snowflake} to embed $V$ into 
$(\ddim(S)/\veps)^{O(1)}$-dimensional $\ell_p$,
and run the exact algorithm of \cite{AP02} on $V$ in the embedded space. 
Since a snowflake perserves the ordering of distances, 
the returned solution for the embedded space 
is a valid solution in the origin space as well.

Turning to the general (non-discrete) case, the above approach is problematic in that the
embedding makes no guarantees on embedded points not in $V$. To address this, we construct a set 
$W$ of candidate center points in the ambient space $\R^d$:
Recall that the problem of finding the minimum enclosing $\ell_p$-ball admits a core-set of 
size $O(\veps^{-2})$ \cite{BHI02}. (That is, for any discrete point set there exists a subset 
of size $O(\veps^{-2})$ with the property that the center of the subset is also the center of a 
$(1+\veps)$-approximation to the smallest ball covering the original set.)
We take all distinct subsets $C \subset V$ of size $|C| = O(\veps^{-2})$ and radius at most
$\tilde{r}$, compute the center point of each subset (see \cite{Pa04}), and add its candidate 
center to $W$. It follows that 
$|W| = k\veps^{-O(\ddim(S)/\veps^2)}$
and 
$\ddim(W) = \log \veps^{-O(\ddim(S)/\veps^2)} = O(\ddim(S) \log(1/\veps) / \veps^2)$.
As above, we use the snowflake embedding of Lemma \ref{lem:snowflake} to embed $V \cup W$ into
$(\ddim(S)/\veps)^{O(1)}$-dimensional $\ell_p$ 
and run the exact algorithm of \cite{AP02} on the embedded space, covering the 
points of $V$ with candidate centers from $W$.
The returned solution is a valid solution in the origin space as well.
\end{proof}

\paragraph{Min-sum clustering.}
In the min-sum clustering problem, the goal is to partition the points of an
input set into $k$ clusters, where the objective function is the sum of 
distances between each intra-cluster point pair.
Schulman \cite{Schulman00} designed algorithms
for min-sum clustering under $\ell_1,\ell_2$ and $\ell_2$-squared costs,
and their runtimes depend exponentially on the dimension. 
We will to obtain faster runtimes for min-sum clustering for $\ell_0$ 
($1 \le p \le 2$) by
using our snowflake embedding as a preprocessing step to reduce dimension.
Ultimately, we will embed the space into $\ell_2$ using a $\frac{1}{2}$-snowflake, and
then solve min-sum clustering with $\ell_2$-squared costs in the embedded 
space; this is equivalent to solving the original $\ell_p$ problem.
We will prove the case of $\ell_1$, and the other cases are simpler.

We are given an input set $S \in \ell_1$, and set 
$c = 2-\frac{1}{d'}$ for some value $d' = \left( \frac{\ddim(S)}{\veps} \right)^{O(1)}$.
We note that the $\frac{1}{c}$-snowflake of $\ell_1$ is
itself in $\ell_c$ \cite{DL97}, and also that this embedding into $\ell_c$ 
can be computed in polynomial time (with arbitrarily small distortion) 
using semi-definite programming \cite{GK11}, although the target dimension may be large. 
We compute this embedding, and then reduce dimension by
invoking our snowflake embedding (Lemma \ref{lem:snowflake}) to compute a 
($\frac{c}{2} = 1 - \frac{1}{2d'}$)-snowflake in
$\ell_c$, with dimension $d'$.
We then consider the vectors to be in $\ell_2$ instead of $\ell_c$,
which induces a distortion of $1+O(\veps)$.
Finally, we run the algorithms of Schulman on the final Euclidean space.
As we have replaced the original $\ell_1$ distances with their
$\left( \frac{1}{c} \cdot \frac{c}{2} = \frac{1}{2} \right)$-snowflake and 
embedded into $\ell_2$, solving min-sum clustering with $\ell_2$-squared costs
on the embedded space solves the original $\ell_1$ problem with distortion
$1+O(\veps)$.

The following lemma follows from the embedding detailed above, 
in conjunction with \cite[Propositions 14,28, full version]{Schulman00}. 
For ease of presentation, we will assume that $k=O(1)$.

\begin{lemma}
Given a set of $n$ points $S \in \R^d$, a $(1+O(\veps))$-approximation to
the $\ell_p$ min-sum $k$-clustering for $S$, for $k=O(1)$, can be computed
\begin{itemize}
\item
in deterministic time
$n^{O(d')} 2^{2^{(O(d'))}} $.
\item
in randomized time
$n^{O(1)} + 2^{2^{(O(d'))}} (\veps \log n / \delta)^{O(d')}$,
with probability $1-\delta$.
\end{itemize}
where $d' = (\ddim(S)/\veps)^{O(1)}$.
\end{lemma}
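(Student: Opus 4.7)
The plan is to apply the snowflake pipeline described in the paragraphs preceding the statement---which reduces the $\ell_p$ min-sum $k$-clustering instance on $S$ to an $\ell_2$ instance in dimension $d'$ with $\ell_2^2$ cost and $(1+O(\veps))$-distortion of pairwise distances---and then to invoke the algorithms of \cite{Schulman00} on the resulting low-dimensional Euclidean instance. The pipeline itself runs in time polynomial in $n$ and $d'$ (the SDP computation \cite{GK11} of the $\frac{1}{c}$-snowflake of $\ell_1$ into $\ell_c$, a single call to Lemma \ref{lem:snowflake} with $\alpha = c/2$, and the reinterpretation of the $\ell_c^{d'}$-image as a vector in $\ell_2^{d'}$), so this preprocessing cost is absorbed into the stated running times, and the task reduces to substituting $d'=(\ddim(S)/\veps)^{O(1)}$ into Schulman's complexity bounds.

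For min-sum $k$-clustering under $\ell_2^2$ cost with constant $k$, \cite[Propositions 14, 28, full version]{Schulman00} provide a deterministic algorithm running in $n^{O(d')} 2^{2^{O(d')}}$ time and, after $\poly(n)$ preprocessing, a randomized algorithm running in $2^{2^{O(d')}} (\veps^{-1}\log n/\delta)^{O(d')}$ time that succeeds with probability at least $1-\delta$. Because the pipeline distorts every pairwise $\ell_p$-distance of $S$ by a factor $1 + O(\veps)$ when viewed as an $\ell_2^2$-distance on the embedded points, and because the min-sum objective is a sum of intra-cluster pairwise distances, the optimum partition of the embedded instance is a $(1+O(\veps))$-approximate optimum for the original problem. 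Substituting $d'=(\ddim(S)/\veps)^{O(1)}$ into Schulman's bounds therefore yields the two running times claimed. For $1 < p \le 2$ the SDP step may be skipped, since the $\frac{1}{p}$-snowflake of $\ell_p$ already embeds isometrically into $\ell_2$ \cite{DL97}, and one directly applies Lemma \ref{lem:snowflake} with $\alpha = p/2$.

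The main point requiring care is the $\ell_c \to \ell_2^{d'}$ reinterpretation at the end of the pipeline in the $\ell_1$ case. With $c = 2 - 1/d'$, the identity map $\ell_c^{d'} \to \ell_2^{d'}$ has distortion $(d')^{1/c - 1/2} = (d')^{1/(4d' - 2)} = 1 + O(\log(d')/d')$, which lies within $1 + O(\veps)$ precisely because $d'$ is polynomial in $\ddim(S)/\veps$; if one attempted to take $d'$ much smaller, this step alone would exceed the allowable distortion budget. Verifying this balance, together with the snowflake-exponent composition $\frac{1}{c} \cdot \frac{c}{2} = \frac{1}{2}$ so that the final $\ell_2^2$-distances correspond to the original $\ell_1$-distances up to $(1+O(\veps))$, is the only nontrivial check in the proof; the remainder is a direct substitution into the bounds of \cite{Schulman00}.
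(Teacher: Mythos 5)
Your main argument is the same as the paper's: for $\ell_1$, compute the SDP embedding of the $(1/c)$-snowflake into $\ell_c$ with $c=2-\frac{1}{d'}$, reduce dimension with Lemma \ref{lem:snowflake} at exponent $c/2$, pass from $\ell_c^{d'}$ to $\ell_2^{d'}$ via the identity map, and hand the result to Schulman's $\ell_2$-squared algorithms. Your explicit check that the identity map costs only $(d')^{1/c-1/2}=(d')^{1/(4d'-2)}=1+O(\log d'/d')=1+O(\veps)$ is precisely the step the paper asserts without computation, and your observation that a $(1+O(\veps))$ per-pair distortion transfers to the min-sum objective because that objective is a sum of intra-cluster pairwise distances is the intended (and correct) approximation argument; the substitution of $d'=(\ddim(S)/\veps)^{O(1)}$ into Propositions 14 and 28 of \cite{Schulman00} then gives both stated running times, as in the paper.

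The one genuine flaw is your shortcut for $1<p\le 2$: you assert that the $(1/p)$-snowflake of $\ell_p$ embeds isometrically into $\ell_2$ and then apply Lemma \ref{lem:snowflake} with $\alpha=p/2$. But the $\alpha$-snowflake of $\ell_p$ is isometrically $\ell_2$-embeddable only for $\alpha\le p/2$ (this is exactly the fact from \cite{DL97} quoted in the paper's comments), and $1/p\le p/2$ fails for $1<p<\sqrt{2}$, so in that range the first step of your shortcut does not exist (it is fine for $p\ge\sqrt 2$). The repair is to swap the two exponents: embed the $(p/2)$-snowflake of $\ell_p$ isometrically into $\ell_2$ (valid for all $p\le 2$), and then apply Lemma \ref{lem:snowflake} inside $\ell_2$ with exponent $\alpha=1/p$, which is admissible precisely because $p>1$ gives $1/p<1$; the composed exponent is again $\frac{p}{2}\cdot\frac{1}{p}=\frac12$, so $\ell_2$-squared costs recover the original $\ell_p$ distances up to $1+O(\veps)$. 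Alternatively, one can simply run your $\ell_1$ pipeline for general $p$: take the $(p/c)$-snowflake of $\ell_p$ into $\ell_c$ and then Lemma \ref{lem:snowflake} with exponent $c/(2p)$. With either correction the remainder of your argument goes through unchanged.
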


\paragraph{Acknowledgements.} We thank Piotr Indyk, Robi Krauthgamer
and Assaf Naor for helpful conversations.

{\small 
\bibliographystyle{alpha}
\bibliography{bib-lp}
}

\appendix

\section{Probability theory.}

The following is a simplified version of Hoeffding's inequality \cite{Ho63, Lu04}:

\begin{claim}\label{clm:hoeffding}
Let $X_1,\ldots,X_k$ be independent real-valued random variables, and
assume $|X_i| \le s$ with probability one. Let $\bar{X} = \frac{1}{k}\sum_{i=1}^k X_i$.
Then for any $z>0$,
\begin{equation}
\Pr \left[| \bar{X} - \mathbb{E}[\bar{X}] | \ge z \right]
\le 2\exp \left( - \frac{2 k z^2}{s^2} \right)
\end{equation}
\end{claim}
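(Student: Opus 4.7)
The plan is to follow the classical Chernoff--Hoeffding route: exponentiate the deviation event and apply Markov's inequality, use independence to factor the resulting moment generating function across the $X_i$, bound each factor via Hoeffding's lemma, and optimize over the free exponential parameter.

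First I would center and symmetrize. Define $Y_i = X_i - \mathbb{E}[X_i]$, so that $\mathbb{E}[Y_i]=0$, the $Y_i$ remain independent, and each $Y_i$ is supported on an interval of length at most $2s$ (since $|X_i|\le s$ forces $X_i$ into a set of diameter $\le 2s$). The event $\{|\bar X - \mathbb{E}[\bar X]|\ge z\}$ splits into its upper and lower tails; a union bound plus the observation that the lower tail for $\sum_i Y_i$ is the upper tail for $\sum_i (-Y_i)$ (which satisfies the same hypothesis) reduces the problem to bounding $\Pr[\sum_i Y_i \ge kz]$ at the cost of a factor of $2$ out front.

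Next I would introduce a parameter $t>0$ and apply Markov's inequality to $\exp(t\sum_i Y_i)$, giving
$\Pr[\sum_i Y_i \ge kz] \le e^{-tkz}\,\mathbb{E}[e^{t\sum_i Y_i}] = e^{-tkz}\prod_{i=1}^k \mathbb{E}[e^{tY_i}]$,
where the equality uses independence. The workhorse is Hoeffding's lemma: for any zero-mean $Y$ supported on $[a,b]$, $\mathbb{E}[e^{tY}] \le \exp(t^2(b-a)^2/8)$. I would prove it by writing $e^{tY}$ as a convex combination of $e^{ta}$ and $e^{tb}$ (via convexity of $x\mapsto e^{tx}$), taking expectations using $\mathbb{E}[Y]=0$ to eliminate the linear term, and then bounding the logarithm of the resulting one-variable function in $t$ through its Taylor expansion---its second derivative is at most $(b-a)^2/4$, giving the stated Gaussian-type MGF bound. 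Applied to each $Y_i$ with $b-a \le 2s$ and multiplied via independence, the product is at most $\exp(c\,k t^2 s^2)$ for an absolute constant $c$.

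Finally, I would optimize $t$ by differentiating the exponent $-tkz + ck t^2 s^2$, setting the derivative to zero, obtaining $t = \Theta(z/s^2)$. Substituting back produces a one-sided bound of the form $\exp(-\Omega(kz^2/s^2))$, and the symmetrization factor of $2$ yields the claimed two-sided estimate (with the constant in the exponent adjusted appropriately according to the exact interpretation of the range $s$). The only nontrivial step is Hoeffding's lemma itself; the remaining ingredients (Markov, factoring via independence, and the single-variable quadratic optimization in $t$) are routine, which is why I would concentrate the write-up effort on the convexity argument for the MGF bound.
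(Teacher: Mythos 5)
The paper never proves this claim: it is stated as a ``simplified version of Hoeffding's inequality'' with citations to Hoeffding and to Lugosi's notes, so there is no in-paper argument to compare against. Your proposal is precisely the canonical proof of that cited result---center the variables, reduce the two-sided tail to a one-sided one by a union bound, apply Markov's inequality to $e^{t\sum_i Y_i}$, factor the moment generating function by independence, control each factor with Hoeffding's lemma (proved via convexity of $x\mapsto e^{tx}$ and a second-derivative bound on the log-MGF), and optimize over $t$. The structure is sound and this is the natural way to supply the missing proof.

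One quantitative caveat, which you partially flagged yourself. Under the hypothesis $|X_i|\le s$, each centered variable lies in an interval of length at most $2s$, so Hoeffding's lemma gives $\E\bigl[e^{tY_i}\bigr]\le e^{t^2 s^2/2}$; optimizing $t=z/s^2$ then yields the bound $2\exp\bigl(-kz^2/(2s^2)\bigr)$, not the stated $2\exp\bigl(-2kz^2/s^2\bigr)$. The stated constant corresponds to the classical form $2\exp\bigl(-2k^2z^2/\sum_i(b_i-a_i)^2\bigr)$ with ranges $b_i-a_i\le s$ (e.g.\ $X_i\in[0,s]$), which is a factor $4$ stronger in the exponent than what $|X_i|\le s$ alone supports. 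So, taken literally, your argument proves the claim only with a weaker constant; this is best read as a small slip in the paper's statement rather than a gap in your proof, and it is harmless for the only place the claim is used (the proof of Theorem \ref{thm:full}), where the constant in the exponent is absorbed into the $O(\cdot)$ bound on the target dimension $k$. If you want your write-up to match the stated inequality exactly, either strengthen the hypothesis to ``each $X_i$ has range at most $s$'' or restate the conclusion with exponent $kz^2/(2s^2)$.
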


The following is a restatement of Bennett's inequality \cite{Be62, Lu04}:

\begin{claim}\label{clm:bennett}
Let $X_1,\ldots,X_k$ be independent real-valued random variables, and
assume $|X_i| \le s$ with probability one. Let
$\bar{X} = \frac{1}{k}\sum_{i=1}^k X_i$ and set
$\sigma^2 = \frac{1}{k} \sum_{i=1}^k \var \{ X_i \}$. Then for any $z>0$,
\begin{equation}
\Pr \left[| \bar{X} - \mathbb{E}[\bar{X}] | \ge z \right]
\le 2 \exp \left( - \frac{k \sigma^2}{s^2} V\left( \frac{sz}{\sigma^2} \right) \right)
\end{equation}
where $V(u) = (1+u) \ln (1+u) -u$ for $u \ge 0$.
\end{claim}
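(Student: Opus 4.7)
(Bennett's inequality).}
The plan is to follow the standard Chernoff/Cram\'er--Chernoff approach: control the moment generating function (MGF) of each centered summand using the almost-sure bound together with the variance, combine the MGFs by independence, then optimize the resulting exponential bound. By replacing each $X_i$ with $X_i - \E[X_i]$ we may assume $\E[X_i]=0$, and after rescaling by $s$ we may also assume $|X_i|\le 1$, so that the variance parameter becomes $\sigma^2/s^2$. Writing $S_k = \sum_{i=1}^k X_i$, Markov's inequality gives, for every $t>0$,
\[
  \Pr[S_k \ge kz] \;\le\; e^{-tkz}\prod_{i=1}^k \E[e^{tX_i}].
\]
The one-sided tail will then yield the claim after a union bound with the symmetric event $\Pr[-S_k\ge kz]$, producing the factor of $2$.

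The next step is the key MGF estimate. For $|X|\le 1$ with $\E[X]=0$ and $\var(X)=\sigma_i^2$, expand $e^{tX}$ as a power series and use $|X|^j\le X^2$ for $j\ge 2$ to obtain
\[
  \E[e^{tX}] \;\le\; 1 + \sigma_i^2\sum_{j\ge 2}\frac{t^j}{j!} \;=\; 1 + \sigma_i^2(e^{t}-1-t) \;\le\; \exp\!\bigl(\sigma_i^2(e^{t}-1-t)\bigr).
\]
Substituting into the Chernoff bound and summing variances gives
\[
  \Pr[S_k \ge kz] \;\le\; \exp\!\Bigl(-tkz + k\tfrac{\sigma^2}{s^2}(e^{t}-1-t)\Bigr),
\]
where $z$ is rescaled appropriately. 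Optimizing in $t$ by setting the derivative to zero yields $e^t = 1 + \tfrac{sz}{\sigma^2}$, i.e.\ $t = \ln(1+sz/\sigma^2)$. Plugging this back in and simplifying produces exactly the exponent $-\tfrac{k\sigma^2}{s^2}V(sz/\sigma^2)$ with $V(u)=(1+u)\ln(1+u)-u$.

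Finally, applying the same argument to $-X_i$ handles the lower tail, and a union bound over the two tails introduces the factor of $2$ in the statement. The main (and really only) obstacle is the MGF bound: one must exploit both the boundedness $|X_i|\le s$ and the variance $\sigma_i^2$ simultaneously, as using only boundedness would instead yield Hoeffding's inequality (Claim \ref{clm:hoeffding}) with a weaker dependence on $\sigma^2$. The choice $|X|^j\le s^{j-2}X^2$ is precisely what couples the two quantities and produces the Bennett form $V(\cdot)$ rather than a quadratic exponent.
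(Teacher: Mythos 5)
Your proposal is correct: it is the standard Cram\'er--Chernoff proof of Bennett's inequality, with the MGF bound $\E[e^{tX}]\le\exp(\sigma_i^2(e^t-1-t))$ obtained from $|X|^j\le s^{j-2}X^2$, optimization at $t=\ln(1+sz/\sigma^2)$ giving exactly the exponent $-\frac{k\sigma^2}{s^2}V(sz/\sigma^2)$, and a union bound over the two tails for the factor $2$. The paper itself does not prove Claim \ref{clm:bennett} at all --- it is stated as a restatement of Bennett's inequality with citations --- so your argument simply supplies the classical proof of the cited result, and it does so correctly.
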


Note that for $u \ge 1$, we have $V(u) = \Omega(u \log u)$, while
for $0\le u<1$, $V(u) = \Theta(u^2)$.

\end{document}